\newtheorem{conjecture}{Conjecture}
\newtheorem{theorem}{Theorem}
\newtheorem{lemma}{Lemma}
\theoremstyle{definition}
\newtheorem{proofWithFormula}{Proof}
\newcommand{\down}{\bigtriangledown}
\newcommand{\up}{\bigtriangleup}
\newcommand{\both}{{\normalfont{\davidsstar}}}
\newcommand{\RC}{Q} 
\newcommand{\rc}{q} 
\newcommand{\etal}{{et~al.~}}
\newcommand{\comp}{{\rm comp}}
\newcommand{\odd}{{\rm odd}}
\newcommand{\para}[1]{\smallskip\noindent{\bf #1}}
\newcommand{\newreptheorem}[2]{\newtheorem*{rep@#1}{\rep@title}\newenvironment{rep#1}[1]{\def\rep@title{#2 \ref*{##1}}\begin{rep@#1}}{\end{rep@#1}}}
\theoremstyle{plain}
\title{Maximum Matchings and Minimum Blocking Sets in $\Theta_6$-Graphs}
\author[1]{Therese Biedl}
\author[1]{Ahmad Biniaz}
\author[1]{Veronika Irvine}
\author[2]{Kshitij Jain}
\author[3]{Philipp Kindermann}
\author[1]{Anna Lubiw}
\affil[1]{David R.~Cheriton School of Computer Science, \authorcr University of Waterloo, Canada\authorcr
  \texttt{$\{$\href{biedl@uwaterloo.ca}{biedl},\href{abiniaz@uwaterloo.ca}{abiniaz},\href{virvine@uwaterloo.ca}{virvine},\href{alubiw@uwaterloo.ca}{alubiw}$\}$@uwaterloo.ca}}
\affil[2]{Borealis AI, Waterloo, Canada\authorcr
  \texttt{\url{kshitij.jain.1@uwaterloo.ca}}}
\affil[3]{Lehrstuhl für Informatik I, Universität Würzburg, Germany\authorcr
  \texttt{\url{philipp.kindermann@uni-wuerzburg.de}}}
\begin{document}

\maketitle
\begin{abstract}
$\Theta_6$-Graphs graphs are important geometric graphs that have many applications especially in wireless sensor networks.
They are equivalent to Delaunay graphs where empty equilateral triangles take the place of empty circles.
We investigate lower bounds on the size of maximum matchings in these graphs. 
The best known lower bound is $n/3$, where $n$ is the number of vertices of the graph.
Babu \etal (2014) conjectured that any $\Theta_6$-graph has a (near-)perfect matching 
(as is true for standard Delaunay graphs). Although this conjecture remains open, 
we improve the lower bound to $(3n-8)/7$. 

We also relate the size of maximum matchings in $\Theta_6$-graphs to the minimum size of
a \emph{blocking set}.  Every edge of a $\Theta_6$-graph on point set $P$ corresponds to
an empty triangle that contains the endpoints of the edge but no other point of $P$.  
A \emph{blocking set} has at least one point in each such triangle. 
We prove that the size of a maximum matching is at least $\beta(n)/2$ where $\beta(n)$ is 
the minimum, over all $\Theta_6$-graphs with $n$ vertices, of the minimum size of a blocking set.  	
In the other direction, lower bounds on matchings can be used to prove bounds on~$\beta$, 
allowing us to show that $\beta(n)\geq 3n/4-2$.
\end{abstract}

\section{Introduction}\label{sec:intro}

One of the many beautiful properties of Delaunay triangulations is that they always 
contain a (near-)perfect matching, that is, at most one vertex is unmatched,
as proved by Dillencourt~\cite{Dillencourt1990}.
This is one example of a structural property of a so-called \emph{proximity graph}. 
A proximity graph is determined by a set $\cal S$ of geometric objects in the plane, such as all disks, or all
axis-aligned squares.  Given such a set $\cal S$ and a finite point set $P$, 
we construct a proximity graph with vertex set $P$ and with an edge $(p,q)$ if there is an object from $\cal S$
that contains $p$ and~$q$ and no other point of $P$.  When $\cal S$ consists of all disks, then we get the Delaunay
triangulation. Proximity graphs are often defined in a more general way, with constraints on how the objects may
touch points~$p$ and $q$, but this narrow definition suffices for our purposes.

Various structural properties have been proved for different classes of proximity graphs. 
Another example, besides the (near-)perfect matching example above, 
is that the $L_\infty$-Delaunay graph, which is a proximity graph defined in terms of 
the set $\cal S$ of all axis-aligned squares, has the even stronger property of always 
having a Hamiltonian path~\cite{Abrego2009}. 

Our paper is about structural properties of $\Theta_6$-graphs, which are the 
proximity graphs determined by equilateral  triangles with a horizontal edge.  
More precisely, for any finite point set $P$, define $G^\up(P)$ to be the 
proximity graph of $P$ with respect to 
\emph{upward} equilateral triangles $\up$, define $G^\down(P)$ to be the 
proximity graph of $P$ with respect to \emph{downward} equilateral triangles $\down$, 
and define $G^\both(P)$, the $\Theta_6$-graph of $P$, to be their union.  
In particular, $G^\both(P)$ has an edge between points $p$ and $q$ if and only if 
there is an equilateral triangle with a horizontal side that contains $p$ and $q$ 
and no other point of $P$.  Such a triangle can be shrunk to an \emph{empty triangle} 
that has one of $p$ or $q$ at a corner, the other point on its boundary, and no 
points of $P$ in its interior. 

The graphs $G^\up(P)$ and $G^\down(P)$ are \emph{triangular-distance} (or ``TD'') 
Delaunay graphs, first introduced by Chew~\cite{Chew1989}.  
Clarkson~\cite{Clarkson1987} and Keil~\cite{Keil1988} first introduced 
$\Theta_6$-graphs(via a different definition), and the equivalence with the above 
definition was proved by Bonichon \etal\cite{Bonichon2010}. 
See Section~\ref{subsec:background} for more information.

We explore two conjectures about $\Theta_6$-graphs.  

\begin{conjecture} [Babu \etal\cite{Babu2014}]
\label{conj:perfect-matching}
    Every $\Theta_6$-graph has a (near-)perfect matching.
\end{conjecture}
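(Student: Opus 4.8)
The natural framework is the Tutte--Berge formula, or equivalently the Gallai--Edmonds structure theorem: a graph $G$ on $n$ vertices has a near-perfect matching if and only if $\operatorname{odd}(G-S)\le|S|+1$ for every $S\subseteq V(G)$, where $\operatorname{odd}$ counts odd components. Since $\Theta_6$-graphs are connected, the case $S=\emptyset$ is free, so the plan is to assume the conjecture fails and, via Gallai--Edmonds, take the set $S$ of vertices missed by some maximum matching; then every component of $G-S$ is odd and factor-critical, and there are $k:=\operatorname{odd}(G-S)\ge|S|+2$ of them, say $C_1,\dots,C_k$. One then tries to derive a contradiction from the empty-triangle representation. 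It would of course be enough to prove that $\Theta_6$-graphs are $1$-tough, mirroring how Dillencourt~\cite{Dillencourt1990} obtains (near-)perfect matchings for ordinary Delaunay triangulations; but, unlike Delaunay triangulations, $\Theta_6$-graphs are \emph{not} $1$-tough, so the argument must genuinely exploit that only \emph{odd} components need to be counted. The $n/3$ bound of Babu \etal\cite{Babu2014} and the $(3n-8)/7$ bound proved here are weakened instances of the inequality $\operatorname{odd}(G-S)\le|S|+1$, and the goal is to sharpen such an argument all the way.

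Concretely, I would work on a plane ``skeleton'': each of the triangular-distance Delaunay subgraphs $G^\up(P)$ and $G^\down(P)$ is a connected plane subgraph of the $\Theta_6$-graph that triangulates the interior of the convex hull of $P$ (Chew~\cite{Chew1989}), and on such a skeleton $C_1,\dots,C_k$ appear as interior-disjoint regions pairwise separated by $S$. Here there is a helpful structural fact: reading the edges at a fixed $s\in S$ in cyclic order, once an edge points into a component $C_i$ the next edge also points into $C_i$ unless it points into $S$ --- because the triangle spanned by two consecutive skeleton edges at $s$ is a face, so its third side is an edge of $G^\both$, and an edge between two vertices outside $S$ keeps us inside one component. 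Hence the runs of component-edges around $s$ are delimited by the edges from $s$ into $S$, so $s$ meets at most $\deg_S(s)+1$ distinct components; summing over $S$ and using that $G[S]$ is planar (so $\sum_{s\in S}\deg_S(s)=2e(G[S])<6|S|$) yields only $k=O(|S|)$. To turn $k=O(|S|)$ into $k\le|S|+1$ one must use the $\Theta_6$ geometry beyond planarity: an empty $\up$ or $\down$ triangle with an endpoint at a corner means that endpoint sees an unobstructed $60^\circ$ cone, and one would feed the six cone directions at each $s\in S$, together with factor-criticality of the $C_i$, into a much tighter charging scheme --- ideally charging each $C_i$ injectively to a vertex of $S$, forcing $|S|\ge k-1$ and contradicting $k\ge|S|+2$.

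The main obstacle is precisely this last jump. For ordinary Delaunay triangulations the empty circumcircles are ``round'' in every direction, which is what lets Dillencourt's toughness proof control how components cluster around a separator vertex; the empty triangles of a $\Theta_6$-graph are degenerate along three fixed directions, so a single $s\in S$ can seemingly touch many odd components arranged around it, and six-cone bookkeeping does not obviously forbid $\operatorname{odd}(G-S)\ge|S|+2$. Two further routes sidestep toughness but carry their own difficulties. First, by the blocking-set bound of this paper (maximum matching $\ge\beta(n)/2$), it would suffice to prove $\beta(n)\ge n-O(1)$, i.e.\ that every $n$-vertex $\Theta_6$-graph needs a near-spanning blocking set; but only $\beta(n)\ge 3n/4-2$ is known, so this needs a genuinely new geometric lower bound on $\beta$. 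Second, one could try to build the matching by induction on $|P|$, but deleting points from $P$ \emph{adds} edges to the $\Theta_6$-graph (empty triangles only become emptier), so a near-perfect matching of $G^\both(P\setminus X)$ need not lie inside $G^\both(P)$, which derails the obvious induction. I would expect genuine progress to require a new packing/visibility theorem for empty $\up$/$\down$ triangles through a common point, plugged into the Gallai--Edmonds argument above.
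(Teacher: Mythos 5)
There is nothing to compare here: the statement you were asked to prove is an \emph{open conjecture}, and the paper does not prove it either --- it only establishes the weaker matching bound $(3n-8)/7$ (Theorem~\ref{thm:matching-bound}) and the equivalence with the blocking-set conjecture (Theorem~\ref{thm:reductions}). Your text is, to its credit, honest about this: it is a survey of attack routes, not a proof. You correctly reconstruct the two partial approaches the paper actually uses --- the Tutte--Berge/toughness argument on the plane skeletons $G^\up$ and $G^\down$ (with the ``consecutive edges at $s$ span a triangular face'' observation, which is essentially Dillencourt's Lemma~\ref{lem:face_double} as used in Section~\ref{sec:matching-bound}), and the reduction to lower-bounding $\beta(n)$. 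But the step that would close the argument, the ``much tighter charging scheme'' assigning each odd component injectively to a vertex of $S$, is only wished for, not supplied, and that is exactly where the conjecture lives. Worth noting: the paper's remark after Theorem~\ref{thm:matching-bound} and Figure~\ref{3-4-fig} give a concrete obstruction to the most natural sharpening --- a component can sit in a degree-$3$ face of $G_A^\down[S_A]$ and only a degree-$4$ face of $G_A^\up[S_A]$, so the hoped-for strengthening $f_3^\up\le f_{5+}^\down$ of Lemma~\ref{lem:f3_4} is false, and your six-cone bookkeeping would run into the same wall.

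Two smaller inaccuracies. First, your assertion that $\Theta_6$-graphs are \emph{not} $1$-tough is unsupported: the paper only proves the weak toughness bound $\comp(G^\both(P)\setminus S)-|S|\le(|P|+16)/7$ and exhibits no set $S$ violating $1$-toughness; whether such a set exists is, as far as this paper goes, open (and a violating $S$ with too many \emph{odd} components would refute the conjecture outright). Second, $G^\up(P)$ and $G^\down(P)$ triangulate the interior of a possibly non-convex ``support hull,'' not the convex hull of $P$; this is why the paper adds the six surrounding points $A$ before invoking Lemma~\ref{lem:face_double}. Your third route (induction on $|P|$) is correctly diagnosed as failing because deleting points only adds edges. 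In short: no gap to localize beyond the obvious one --- the proposal contains no proof of the conjecture, and neither does the paper.
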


\begin{figure}[t]
  \centering
	\subcaptionbox{\label{fig:perfect-matching-a}}{\includegraphics[page=1,scale=.8]{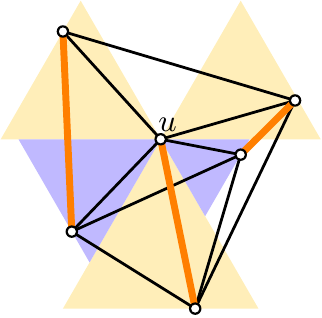}}
	\hfil
	\subcaptionbox{\label{fig:perfect-matching-b}}{\includegraphics[page=3,scale=.8]{example2}}
	\hfil
	\subcaptionbox{\label{fig:perfect-matching-c}}{\includegraphics[page=4,scale=.8]{example2}}
    
  \caption{A $\Theta_6$-graph on $n=6$ points with a perfect matching and a blocking set of size $5$.  
  (a)~A perfect matching. Empty triangles corresponding to edges of~$u$ are highlighted. 
  (b)~A blocking set $B$ of size $n-1$. Edges have the same color as their blocking point.  
  (c)~$G^\both(P \cup B)$.  For every edge, one endpoint is in~$B$.    }
  \label{fig:perfect-matching}
\end{figure}

See Figure~\ref{fig:perfect-matching} for an example.  
The best known bound is that every $\Theta_6$-graph on $n$ points has a matching of size at least~$n/3$ 
minus a small constant---in fact, this bound holds for any planar graph with minimum 
degree~3~\cite{NB79}, hence for any triangulation and in particular for each of $G^\up$ and $G^\down$
(modulo the small additive constant)---see Babu \etal\cite{Babu2014} for the exact bound of $\lceil (n-1)/3 \rceil$.
Our main result is an improvement of this lower bound:

\begin{theorem}
\label{thm:matching-bound}
    Every $\Theta_6$-graph on $n$ points has a matching of size $(3n-8)/7$. 
\end{theorem}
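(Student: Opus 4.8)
The plan is to derive Theorem~\ref{thm:matching-bound} from a structural sparsity property of $\Theta_6$-graphs via the Tutte--Berge formula. If a maximum matching of $G=G^\both(P)$ leaves $d$ vertices unmatched, it has size $(n-d)/2$, so the theorem is equivalent to the deficiency bound $d\le(n+16)/7$. Using the Gallai--Edmonds decomposition, let $D$ be the set of vertices missed by some maximum matching, $A=N_G(D)\setminus D$, and $C$ the rest; then $d=c-|A|$, where $c$ is the number of components of $G[D]$ (each odd and factor-critical) and $G[C]$ has a perfect matching. Say $k_1$ of those components are single vertices --- they form an independent set $S$ with $N_G(S)\subseteq A$ --- and $k_{\ge3}$ have at least three vertices, so $c=k_1+k_{\ge3}$ and $n\ge|A|+k_1+3k_{\ge3}$. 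A routine calculation then shows that it suffices to prove $3k_1+2k_{\ge3}\le4|A|+8$. Note the case $k_{\ge3}=0$ is an expansion statement: every independent set $S$ of a $\Theta_6$-graph has $|N_G(S)|\ge\tfrac14(3|S|-8)$, equivalently $\alpha(G^\both(P))\le(4n+8)/7$, and I would establish this first.

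\medskip For the expansion lemma one has to use that $G^\both(P)=G^\up(P)\cup G^\down(P)$ is the union of the two planar triangulations of $P$, together with the defining property that every edge is witnessed by an empty equilateral triangle. A natural first move is a face count: since $S$ is independent, each triangular face of $G^\up$ contains at most one vertex of $S$, so the faces of $G^\up$ meeting $S$ are pairwise distinct, each being a triangle $s\nu_1\nu_2$ with $s\in S$ and $\nu_1\nu_2$ an edge of $G^\up$ inside $A$; as each edge lies on at most two faces, $\sum_{s\in S}\deg_{G^\up}(s)\le 2\,e_{G^\up}(A)$, and likewise for $G^\down$. With the naive planar estimate $e\le 3|A|-6$ this only gives a weak bound (of order $\alpha\lesssim\tfrac45 n$). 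The crux is to do much better by using the two triangulation layers \emph{and} the empty-triangle geometry together --- so that the edges of $G^\up$ and of $G^\down$ actually carrying a face through $S$ are charged against essentially disjoint parts of the point configuration --- yielding $\sum_{s\in S}\deg_{G^\both}(s)\le 4|A|+O(1)$; combined with the degree lower bound $\deg_{G^\both}(s)\ge 3$ (valid for all but a bounded number of extreme vertices) this gives $3|S|\le 4|N_G(S)|+8$.

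\medskip To pass from this to the general inequality $3k_1+2k_{\ge3}\le4|A|+8$, I would contract each of the $k_{\ge3}$ large components of $G[D]$ to a single vertex: the contracted vertices together with $S$ form an independent set in the resulting minor of $G^\both(P)$ whose neighbourhood is $A$, and --- since $G^\both$ is $2$-connected --- each such component has at least two neighbours in $A$. The subtlety is that a component of $G-A$ need not be connected inside $G^\up$ or inside $G^\down$ separately, so naive contraction can break the planarity of a layer; one works around this, e.g.\ by selecting in each large component a spanning connected subgraph contained in a single layer, or by relating the components of $G-A$ to those of $G^\up-A$ and of $G^\down-A$. This is also where the precise constants enter: the additive $8$ is two Euler defects ($-4$ per planar layer), and the ratio $3/7$ is forced by the component bookkeeping above.

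\medskip The hard part is the expansion lemma. Planarity, or even thickness~$2$, together with a minimum-degree bound does not suffice: a single planar triangulation can already have independence number about $n/2$, and a union of two planar graphs of minimum degree~$3$ can be highly ``star-like''. So the proof must genuinely exploit the geometry of empty equilateral triangles (the same feature that keeps a $\Theta_6$-graph from having too many nearly-isolated vertices), and squeezing the geometric count tight enough to reach the factor $\tfrac34$ in $|N_G(S)|\ge\tfrac14(3|S|-8)$, while correctly absorbing convex-hull effects into the additive constant, is the main technical obstacle. Given the lemma, the rest is Gallai--Edmonds bookkeeping plus a handful of small base cases.
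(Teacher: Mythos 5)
Your reduction is sound: Gallai--Edmonds (equivalently Tutte--Berge) plus the count $n\ge|A|+k_1+3k_{\ge3}$ correctly reduces the theorem to $3k_1+2k_{\ge3}\le 4|A|+8$, and this is the same arithmetic that drives the paper's proof. But the entire mathematical content of the theorem lives in the inequality you defer as ``the main technical obstacle,'' and your proposal gives no actual argument for it --- only the (correct) observation that planarity, thickness~2, and minimum degree cannot suffice and that the empty-triangle geometry must enter. That is a genuine gap, not a routine detail: without the expansion/component bound the whole proof is a conditional statement. For the record, the paper proves the single inequality $\comp(G^\both(P)\setminus S)\le(4|S|+8)/3$ for \emph{every} $S\subseteq P$, which subsumes both your expansion lemma (take $S=N(I)$ for an independent set $I$) and your $k_{\ge3}$ case, so no contraction of large components is ever needed --- each component is simply represented by one point. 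The mechanism is: add six surrounding points so that $G^\up$ and $G^\down$ both become triangulations; by Dillencourt's lemma each component of $G^\both\setminus S$ sits inside a single face of $G^\up[S]$ and a single face of $G^\down[S]$; the face-degree identity $\sum_d(d-2)|{\cal F}_d|=2|V|-4$ gives $f_3+2f_{4+}\le 2|S|+4$ in each layer; and the crucial geometric lemma is that no component can occupy a face of degree~$3$ in \emph{both} layers. That lemma is proved by routing, from the component's representative, a path in the common minimum spanning tree to one vertex of $S$ plus three paths (via the path-in-triangle lemma) to the closest $S$-vertices in the three alternate cones, producing four distinct $S$-vertices on the boundary of the component's face in one of the two layers. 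This is exactly the ``charge the two layers against disjoint parts of the configuration'' step you gesture at but do not supply.

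Two smaller points. First, your assertion that $G^\both$ is $2$-connected is false (collinear points yield a path), so the ``each large component has at least two neighbours in $A$'' step of your contraction argument would need repair even if the expansion lemma were in hand. Second, your face-count sketch for the expansion lemma charges faces of $G^\up$ through $s\in S$ to edges ``inside $A$,'' but the other two vertices of such a face need only avoid $S$, not lie in $N(S)$; the paper avoids this issue entirely by counting faces of the induced graph $G^\up[S]$ (for the Tutte set $S$) rather than faces of the full triangulation.
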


We prove Theorem~\ref{thm:matching-bound} in Section~\ref{sec:matching-bound} 
using the same technique that has been used for matchings in planar proximity graphs, 
namely the Tutte-Berge theorem, which relates the size of a maximum matching in a 
graph to the number of components of odd cardinality after removing some vertices.  
In our case, this approach is more complicated because $\Theta_6$-graphs are not planar. 

Our second main result relates the size of matchings to the size of \emph{blocking} 
or \emph{stabbing} sets of proximity graphs, which were introduced by 
Aronov \etal\cite{aronov2011witness} for purposes unrelated to matchings.
For a proximity graph $G(P)$ defined in terms of a set of objects~$\cal S$,
 we say that a set $B$ of points 
\emph{blocks} $G(P)$ if $B$ has a point in the interior of any object from $\cal S$ 
that contains exactly two points of $P$, i.e., the set $B$ destroys all the edges 
of $G(P)$, or equivalently, $G(P \cup B)$ has no edges between vertices in $P$;
see Figure~\ref{fig:perfect-matching}.
See Section~\ref{subsec:background} for previous results on blocking sets.

For a set of points $P$, let $\beta(P)$ be the minimum size of a blocking set 
of $G^\both(P)$. Let $\beta(n)$ be the minimum, over all point sets $P$ of size $n$, of $\beta(P)$. 
It is known that $\beta(n) \ge \lceil (n-1)/2\rceil$
since that is a lower bound for blocking all 
$G^\up$-graphs of $n$ points~\cite{Biniaz2015}. 
Let $\mu(n)$ be the minimum, over all point sets $P$ of size $n$, of 
the size of a maximum matching in 
$G^\both(P)$.  Conjecture~\ref{conj:perfect-matching} can hence be restated as $\mu(n)\geq \lceil(n-1)/2\rceil$.
We relate the parameters $\mu$ and $\beta$ as follows.

\wormhole{thmReductions}
\begin{theorem}
\label{thm:reductions}
(a) For any point set $P$ of $n$ points in the plane, $G^\both(P)$ has a matching of size $\beta(n)/2$, i.e., $\mu(n) \ge \beta(n)/2$.
(b) On the other hand, if $\mu(n)\geq cn+d$ for some constants $c,d$, then $\beta(n)\geq (cn+d)/(1-c)$.
\end{theorem}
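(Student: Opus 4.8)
\medskip\noindent\textbf{Part (b), first.} I would handle (b) by the ``add the blocking set to the point set'' trick. Take an arbitrary $P$ with $|P|=n$ and a minimum blocking set $B$ of $G^\both(P)$, chosen disjoint from $P$, so $|B|=\beta(P)$ and $P\cup B$ has $n+|B|$ points. Since $B$ blocks $G^\both(P)$, the graph $G^\both(P\cup B)$ has no edge with both ends in $P$, hence every edge of it meets $B$. Applying the hypothesis $\mu(n')\ge cn'+d$ with $n'=n+|B|$, there is a matching $M$ in $G^\both(P\cup B)$ with $|M|\ge c(n+|B|)+d$; but the $B$-endpoints of the edges of $M$ are distinct, so $|M|\le|B|$. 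Hence $c(n+|B|)+d\le|B|$, i.e.\ $|B|\ge(cn+d)/(1-c)$ (using $c<1$). As $P$ was arbitrary, $\beta(n)\ge(cn+d)/(1-c)$, and plugging in $c=3/7$, $d=-8/7$ from Theorem~\ref{thm:matching-bound} gives $\beta(n)\ge 3n/4-2$.

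\medskip\noindent\textbf{Part (a), the plan.} I would prove the stronger claim that $\beta(P)\le 2\mu(G^\both(P))$ for every $P$; since $\beta(n)\le\beta(P)$ this yields $\mu(n)\ge\beta(n)/2$. Fix $P$, let $M$ be a \emph{maximum} matching of $G^\both(P)$ of size $m$, and let $U$ be its set of unmatched vertices, with $|U|=n-2m$. Maximality forces $U$ to be independent, so every edge of $G^\both(P)$ has a matched endpoint. The idea is to construct a blocking set of exactly $2m$ points, placed infinitesimally close to the matched vertices. I would use the cone description of $\Theta_6$-graphs: around a vertex $v$ there are six $60^\circ$ cones $C_1^v,\dots,C_6^v$ whose bounding rays are parallel to the sides of the defining triangles, and for a small enough common $\varepsilon>0$ a point $v+\varepsilon\delta$ with $\delta$ interior to $C_i^v$ lies inside \emph{every} empty equilateral triangle witnessing an edge from $v$ to a neighbour in $C_i^v$ (such triangles are not slivers, since they are equilateral and contain both endpoints). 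Since adding points to a $\Theta_6$-graph only ever destroys edges among the old points and never creates them, it then suffices to pick at most $2m$ pairs $(v,i)$ so that every edge $(x,y)$ of $G^\both(P)$ is covered by one of its two admissible pairs --- $(x,i)$ with $y\in C_i^x$, or $(y,j)$ with $x\in C_j^y$ --- noting that a single pair $(v,i)$ covers all edges from $v$ into $C_i^v$ at once.

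\medskip\noindent\textbf{The hard part.} The delicate step is showing that $2m$ such pairs always suffice. The naive ``one pair per matched vertex'' choice fails, because a matched vertex can have unmatched neighbours spread over several cones and each edge to an unmatched vertex must be paid for on the matched side. My plan is to distribute the budget of $2m$ pairs by: (i)~orienting every matched--matched edge toward one endpoint, and (ii)~for each unmatched vertex $u$, charging its incident edges either to cones of $u$ or to the cone of the matched neighbour at the far end, and then arguing --- via a Hall-type / counting argument that exploits $|U|=n-2m$ and the independence of $U$ --- that the number of distinct pairs ever used is at most $2m$. Extracting exactly the factor $2$ here (rather than the factor $\sim 6$ a crude argument gives) is, I expect, the main obstacle; the cone machinery above and the ``adding points only removes edges'' observation are the routine ingredients.
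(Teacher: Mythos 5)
Your part~(b) is correct and is essentially the paper's own argument: form $G^\both(P\cup B)$ for a minimum blocking set $B$, note that $P$ is independent there so every matching edge uses a distinct point of $B$, and compare $|M|\le|B|$ with $|M|\ge c(n+|B|)+d$. (The paper counts unmatched vertices rather than $B$-endpoints, but the resulting inequality is the same.) The specialization $c=3/7$, $d=-8/7$ giving $\beta(n)\ge 3n/4-2$ also matches the paper.

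Part~(a) has a fatal flaw, and it is not the step you flag as the ``main obstacle.'' The strengthened claim your whole plan rests on --- $\beta(P)\le 2\mu(P)$ for \emph{every} point set $P$ --- is false, and the paper itself supplies a counterexample. Section~\ref{sec:disjoint-triangles} constructs an $n$-point set $P$ with at least $(5n-6)/4$ pairwise internally-disjoint empty triangles, so $\beta(P)\ge\alpha(P)\ge(5n-6)/4$; since trivially $2\mu(P)\le n$, for every $n>6$ this forces $\beta(P)>2\mu(P)$. Hence no selection of $2m$ perturbed points (or any $2m$ points at all) can block that $\Theta_6$-graph, and the covering argument you defer is not merely delicate but impossible in general. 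The inequality $\mu(n)\ge\beta(n)/2$ genuinely needs the minimum over all point sets on the $\beta$ side; it cannot be routed through a per-instance bound.

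The paper's actual route avoids this by never comparing $\beta(P)$ with $\mu(P)$ for the same $P$. It fixes $S\subseteq P$, picks one representative per component of $G^\both(P)\setminus S$ to form a set $Q$, and uses the path-in-triangle property (Lemma~\ref{path-in-triangle-lemma}) to show that every empty triangle introducing an edge of $G^\both(Q)$ contains a point of $S$; thus $S$ blocks $G^\both(Q)$ and $|S|\ge\beta(|Q|)$ --- note this is $\beta$ of a \emph{cardinality}, i.e., the minimum over all sets of that size, which is exactly where the false per-instance statement is sidestepped. Combining with the monotonicity $\beta(n)\le\beta(k)+(n-k)$ (Lemma~\ref{increasing-lemma}) and the Tutte--Berge formula gives
\[
\mu(P)\;\ge\;\frac{n-(|Q|-|S|)}{2}\;\ge\;\frac{n-\bigl(|Q|-\beta(|Q|)\bigr)}{2}\;\ge\;\frac{\beta(n)}{2}.
\]
If you want to keep a constructive flavor, the object to build is not a small blocking set for an arbitrary $P$ but a witness set $S$ whose removal leaves few components; that is what Tutte--Berge converts into a matching bound.
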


The two statements in the theorem are proved in Section~\ref{sec:blocking}.
The idea of using bounds on blocking sets to obtain bounds on matchings is new, and is proved via the Tutte-Berge theorem. 
Theorem~\ref{thm:reductions} has two consequences.  The first is that 
Theorem~\ref{thm:matching-bound} implies that $\beta(n)\geq 3n/4-2$.
The second consequence is that Conjecture~\ref{conj:perfect-matching} is equivalent to the following:

\begin{conjecture}
  \label{conj:blocking}
  $\beta(n) \ge n-1$.
\end{conjecture}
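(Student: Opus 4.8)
The plan is to prove the lower bound directly and geometrically, without passing through Theorem~\ref{thm:reductions}. Fix a point set $P$ with $|P|=n$ and let $B$ be any blocking set of $G^\both(P)$; by the third characterization of blocking, $P$ is an independent set in the $\Theta_6$-graph $G^\both(P\cup B)$, and the goal is to show $|B|\ge n-1$. Write $T_\up=G^\up(P\cup B)$ and $T_\down=G^\down(P\cup B)$; each is a TD-Delaunay triangulation of the same point set $P\cup B$, and $P$ is independent in both. I would first record the easy single-orientation estimate. In the triangulation $T_\up$ the three vertices of every triangular face are pairwise adjacent, so at most one of them lies in the independent set $P$; consequently the triangle-stars of distinct $P$-vertices share no triangle, and deleting $P$ turns each internal $P$-vertex into a distinct empty face of the planar graph $T_\up-P$, the faces remaining separated by the surviving $B$--$B$ edges. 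Since a simple planar graph on $|B|$ vertices has at most $2|B|-O(1)$ faces, this yields $|B|\ge (n-O(1))/2$, recovering the known bound $\beta(n)\ge\lceil(n-1)/2\rceil$.

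The heart of the argument is to eliminate this factor of two by using both orientations simultaneously. I would set up a bipartite ``responsibility'' graph $R$ with parts $P$ and $B$, joining each $p\in P$ to a small, canonically chosen set of $B$-points drawn from its neighbours in $T_\up$ and $T_\down$ --- for instance the two $B$-vertices bounding the up-hole of $p$ (its extreme left and right up-neighbours) together with the two bounding its down-hole. The target is a matching of $R$ saturating all but one vertex of $P$: such a matching injects $P\setminus\{p_0\}$ into $B$ and gives exactly $|B|\ge n-1$. By Hall's theorem this reduces to the expansion condition $|N_R(S)|\ge |S|-1$ for every $S\subseteq P$, which I would attempt to verify by a planarity count applied to $T_\up$ and $T_\down$ together, arguing that the up-neighbourhoods and down-neighbourhoods of a vertex subset cannot both collapse because the two triangulations cover the same point set with complementary cone directions.

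The step I expect to be the genuine obstacle is precisely this expansion bound. A count on a single triangulation certifies only $|N(S)|\ge |S|/2$ --- exactly the $(n-1)/2$ barrier above --- so proving $|N_R(S)|\ge |S|-1$ demands a new geometric input that ties the up- and down-structure of $P$ together rather than bounding them independently. This crux is not merely technical: by the equivalence recorded after Theorem~\ref{thm:reductions}, establishing $\beta(n)\ge n-1$ is equivalent to Conjecture~\ref{conj:perfect-matching}, so the missing ingredient is the same one that keeps the (near-)perfect matching conjecture open. A promising concrete form of the needed input is a ``no-deficient-set'' lemma asserting that for every $S\subseteq P$ the up-holes and down-holes of $S$ together touch at least $|S|-1$ distinct points of $B$; proving such a lemma through the cone geometry of the $\Theta_6$-construction --- where each $P$-point's nearest neighbour in every one of its six $60^\circ$ cones is forced to lie in $B$ --- is where I would concentrate the effort.
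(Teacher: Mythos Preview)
The paper does not prove this statement: it is stated as Conjecture~\ref{conj:blocking}, explicitly left open, and shown (via Theorem~\ref{thm:reductions}) to be equivalent to the open Conjecture~\ref{conj:perfect-matching}. The only bound the paper actually establishes is $\beta(n)\ge 3n/4-2$, derived from Theorem~\ref{thm:matching-bound} through Theorem~\ref{thm:reductions}(b). So there is no ``paper's own proof'' to compare against.

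Your proposal is not a proof either, and you say so yourself. The first half---recovering $\beta(n)\ge\lceil(n-1)/2\rceil$ by a face count in a single TD-Delaunay triangulation---is essentially the known argument and is fine as a warm-up. The second half sets up a Hall-type bipartite matching between $P$ and $B$ and reduces the conjecture to an expansion inequality $|N_R(S)|\ge |S|-1$; but you correctly identify that you have no proof of this inequality, and that obtaining it is tantamount to proving the (near-)perfect matching conjecture. That is exactly the status in the paper: the two conjectures are equivalent, and neither is resolved. Your ``no-deficient-set'' lemma is a reasonable reformulation of the obstacle, but it is a restatement, not a reduction to something easier---the cone-geometry observation that every $P$-point's six nearest cone-neighbours lie in $B$ gives only a local degree bound and does not by itself control the overlap of neighbourhoods across a subset $S$, which is precisely what Hall requires.

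In short: the statement is open in the paper, your proposal is a research plan rather than a proof, and the gap you flag is the genuine one.
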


In the remainder of the paper, 
we explore an approach to obtaining lower bounds on~$\beta(n)$.
For $B$ to be a blocking set, it must have a point in 
every empty triangle of~$P$ that defines an edge in $G^\both(P)$.
Let~$\alpha(n)$ be the maximum number of pairwise internally-disjoint 
empty triangles of any point set of size~$n$.
Clearly, $\beta(P) \ge \alpha(P)$ and $\beta(n) \ge \alpha(n)$.
Conjecture~\ref{conj:perfect-matching} would be proved if we could show that $\alpha(n) \ge n-1$.
However, we give an example of a point set~$P$ of size~$n$ with $\alpha(P) \le 3n/4$, which 
shows that $\alpha(n) \le 3n/4$. 
We also explore a previously-studied variant where the empty triangles
must be completely disjoint, i.e., even their boundaries must be disjoint. 
If $D$ is such a set, then every empty triangle in~$D$ corresponds to an edge 
in $G^\both(P)$, and these edges share no endpoint because the triangles are disjoint. 
Then $D$ corresponds to a \emph{strong matching} in $G^\both(P)$.
Strong matchings were introduced by \'Abrego \etal\cite{Abrego2004,Abrego2009}
for the case where the empty objects are line segments, rectangles, disks, or squares. They 
showed that  Delaunay and $L_\infty$-Delaunay graphs need not have strong 
(near-)perfect matchings (for disks and squares, respectively). 
See the following subsection for further background. 
Biniaz \etal\cite{Biniaz2015} proved that for any point set of size $n$, 
$G^\up(P)$ has a strong matching of at least $\lceil \frac{n-1}{9} \rceil$ edges and 
$G^\both(P)$ has a strong matching of at least $\lceil (n-1)/4 \rceil$ edges.
We prove an upper bound 
on the size of a strong matching in $\Theta_6$-graphs 
by giving an example where the maximum strong matching in $G^\both(P)$ 
has $2n/5$ edges.  

In the final Section~\ref{sec:properties}, we prove some additional bounds on 
the number of edges, maximum vertex degree, and maximum independent set of $\Theta_6$-graphs.

\subsection{Background}\label{subsec:background}

\para{$\Theta_6$-graphs and TD-Delaunay graphs.}  
The $\Theta_6$-graph on a set $P$ of points in the plane, as originally defined 
by Clarkson~\cite{Clarkson1987} and Keil~\cite{Keil1988}, is a geometric graph 
with vertex set $P$ and edges constructed as follows.  For every point $p \in P$, 
place 6 rays emanating from $p$ at angles that are multiples of $\pi /3$ radians 
from the positive $x$-axis.  These rays partition the plane into $6$ cones with 
apex $p$, which we label $C_1, \ldots, C_6$ in counterclockwise order starting 
from the positive $x$-axis; see Figure~\ref{cones-fig-a}.  
Add an edge from $p$ to the \emph{closest} point in each cone $C_i$,
where the distance between the apex $p$ and a point $q$ in $C_i$ is measured by 
the Euclidean distance from $p$ to the projection of $q$ on the bisector of $C_i$ 
as depicted in Figure~\ref{cones-fig-a}.
If the apex is not clear from the context, then we use $C_i^p$ to denote the 
cone $C_i$ with apex $p$. We sometimes refer to $C_i^p$ as the $i^\text{th}$ {\em cone} of $p$. 
It is straight-forward to show that this definition of $\Theta_6$-graphs is 
equivalent to the definition of $G^\both(P)$.
For any such edge, there is an equilateral up or down triangle with $p$ at one 
corner and $q$ on the opposite side, and no other points of $P$ inside.  
Thus, the edge is in $G^\both(P)$. In the other direction, if $e = (p,q)$ is an 
edge of $G^\up(P)$ then there is a triangle that contains $p$ and $q$ and no 
other point.  We can shrink such a triangle until $p$ and $q$ are on the boundary 
and at least one of $p$ or $q$ is a corner of the triangle.  
Then $(p,q)$ is an edge of the $\Theta_6$-graph as just defined.
Thus, the above definition of $\Theta_6$-graphs is equivalent to the definition of $G^\both(P)$.
The edges of $G^\up(P)$ come from the odd cones, and the edges of $G^\down(P)$ 
come from the even cones, so the TD-Delaunay graphs $G^\up(P)$ and $G^\down$ 
are known as ``half-$\Theta_6$'' graphs.

TD-Delaunay graphs are called TD-Delaunay ``triangulations''. In fact, they might
fall short of being triangulations.  As discussed by 
Drysdale~\cite{drysdale1990practical} and Chew~\cite{Chew1989} 
(see also~\cite{Aurenhammer14}), they are plane graphs that consist of a 
``support hull'' which need not be convex, and a complete triangulation of the 
interior (an explicit proof can be found in~\cite{Babu2014}).  This anomaly is 
often remedied by surrounding the point set with a large bounding triangle. We 
will use a similar approach later on.    

The $\Theta_6$-graphs, and the more general $\Theta_k$-graphs, which are defined in terms of $k$ cones, have 
some properties that are relevant in a number of application areas.  
In particular, they are \emph{sparse}---$\Theta_k(P)$ has at most $k|P|$ 
edges~\cite{Morin2014}---and they are \emph{spanners}---the ratio (known as 
the \emph{spanning ratio}) of the length of the shortest path between any two 
vertices in $\Theta_k$, $k\geq 4$, to the Euclidean distance between the 
vertices is at most a constant~\cite{Bose2018,Bose2015,Chew1989,Keil1988}.
Because of these properties, $\Theta_k$-graphs have applications in many areas including 
wireless networking~\cite{Alzoubi2003,Bose2012}, motion planning~\cite{Clarkson1987}, 
real-time animation~\cite{Fischer1998}, and 
approximating complete Euclidean graphs~\cite{Chew1989,Keil1992}. 

Among $\Theta_k$-graphs, $\Theta_6$ has some nice properties that make it suitable 
for communications in wireless sensor networks. In particular, $k=6$ is the smallest 
integer for which: 
(i)~$\Theta_k$ has spanning ratio 2~\cite{Bonichon2010,Bose2018,Bose2015}; 
(ii) the so-called $\Theta\Theta_k$-graph, 
which is a subgraph of $\Theta_k$ where each vertex has only one incoming edge per cone, 
is a spanner~\cite{Damian2018}; and 
(iii) so-called half-$\Theta_k$-graphs, which is another subgraph of $\Theta_k$, 
admit a deterministic local competitive routing strategy~\cite{Bose2012}.

\para{Convex Distance Delaunay Graphs.}
For a set $\cal S$ of homothets of a convex polygon, the corresponding proximity 
graphs are the \emph{convex distance Delaunay graphs}.  This concept has been 
thoroughly studied, see, e.g.,~\cite{Aurenhammer14,drysdale1990practical}.  
Some of the helper lemmas we need for half-$\Theta_6$-graphs come from more 
general results that hold for all convex distance Delaunay graphs.

\para{Blocking Sets in Proximity Graphs.} 
Blocking or ``stabbing'' sets were introduced by Aronov \etal\cite{aronov2011witness}
as a more flexible way to represent graphs via proximity
(see also the thesis of Dulieu~\cite{dulieu2012witness}).
The idea was explored further by Aichholzer \etal\cite{Aichholzer2013}
who showed that $3n/2$ points are sufficient and at least $n-1$ points are 
necessary to block any Delaunay triangulation with $n$ vertices. 
Biniaz \etal\cite{Biniaz2015} showed that at least $\left\lceil(n-1)/2\right\rceil$ 
points are necessary to block any $G^\up$-graph with~$n$ vertices.
This bound is tight for $G^\up$-graphs and provides a lower bound on $\beta(n)$. 
The bound also applies to $\Theta_6$-graphs with $n$ vertices, that is, $\beta(n)\geq \left\lceil\frac{n-1}{2}\right\rceil$.
To block any Gabriel graph with $n$ vertices, $n-1$ points are sufficien~\cite{Aronov2013} 
and at least $\left\lceil\frac{n-1}{3}\right\rceil$ points are necessary~\cite{Biniaz2015GG} 
(this lower bound is tight in the sense that there are Gabriel graphs that can be 
blocked by this number of points).

\para{Strong Matchings in Proximity Graphs.} 
The idea of \emph{strong matchings} in proximity graphs---i.e., pairwise disjoint 
objects from $\cal S$ each with two points of~$P$ on the boundary and no points in the interior---was
introduced by \'Abrego \etal\cite{Abrego2004,Abrego2009} for line segments, rectangles, disks, and squares.  
They show that strong (near-)perfect matchings always exist in the first two cases, but that they do 
not always exist for disks (Delaunay graphs) and squares ($L_\infty$-Delaunay graphs). 
In fact, they prove upper bounds of $36n/73 $ and $5n/11$, respectively, on the size of a strong matching.
They also give lower bounds of $\lceil (n-1)/8 \rceil$ and 
$\lceil n/5 \rceil$, respectively.  The lower bound for squares was improved to 
$\lceil (n-1)/4 \rceil$ by Biniaz \etal\cite{Biniaz2015} who also proved lower bounds of
$\lceil (n-1)/9 \rceil$ for $G^\up$ and $\lceil (n-1)/4 \rceil$ for~$G^\both$.

\begin{figure}[t]
	\centering
	\subcaptionbox{\label{cones-fig-a}}{\includegraphics[page=1,scale=.6]{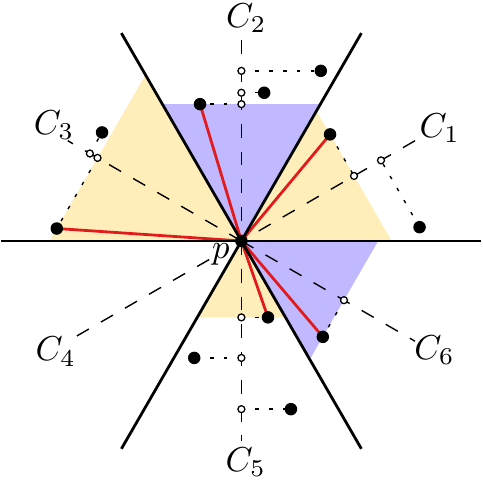}}
	\hfil
	\subcaptionbox{\label{cones-fig-b}}{\includegraphics[page=2,scale=.6]{cones-3}}
	\caption{The construction of (a) the $\Theta_6$-graph, and (b) the odd half-$\Theta_6$-graph.}
	\label{cones-fig}
\end{figure}

\subsection{Preliminaries}\label{subsec:prelim}

We assume that points are in general position and that no line passing through 
two points of $P$ makes an angle of $0^\circ$, $60^\circ$ or $120^\circ$ with the horizontal.

\para{Notation.}
For two points $p$ and $q$ in the plane, we denote by $\bigtriangleup(p,q)$ 
(resp., by $\bigtriangledown(p,q)$) the smallest upward (resp., downward) 
equilateral triangle that has $p$ and $q$ on its boundary. 
We say that a triangle is \emph{empty} if it has no points of $P$ in its interior.
With these definitions, the 
$\Theta_6$-graph has an edge between $p$ and $q$ if and only if $\bigtriangleup(p,q)$ 
is empty or $\bigtriangledown(p,q)$ is empty, in which case we say that the edge $(p,q)$ 
is {\em introduced} by $\bigtriangleup(p,q)$ or by $\bigtriangledown(p,q)$. 

Let $P$ be a set of points.  We use the following notation:
\begin{align*}
    &\mu(P) = \mbox{maximum number of edges in a matching of } G^\both(P)\\
    &\beta(P) = \mbox{minimum size of a set of points that block all empty triangles of } P\\
    &\alpha(P) = \mbox{maximum number of pairwise internally disjoint empty triangles}\\
    &\mu^*(P) = \mbox{maximum number of edges in a strong matching of }  G^\both(P)
\end{align*}
Furthermore, we define $\mu(n), \beta(n), \alpha(n), \mu^*(n)$ to be the minimum of the corresponding parameter over all sets of $n$ points.

\para{Properties of $\Theta_6$-graphs.}  We need the following two properties of $\Theta_6$-graphs:

\begin{lemma}[Babu \etal\cite{Babu2014}]
	\label{path-in-triangle-lemma}
	Let $P$ be a set of points in the plane, and let $p$ and $q$ be any two points in $P$. There is a path between $p$ and $q$ in $G^\up(P)$ that lies entirely in $\bigtriangleup(p,q)$. Moreover, the triangles that introduce the edges of this path also lie entirely in $\bigtriangleup(p,q)$. Analogous statements hold for $G^\down(P)$ and $\bigtriangledown(p,q)$.  
\end{lemma}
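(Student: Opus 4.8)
The plan is to prove the statement for $G^\up(P)$ and $\bigtriangleup(p,q)$; the statement for $G^\down(P)$ and $\bigtriangledown(p,q)$ then follows by reflecting the whole configuration across a horizontal line, which interchanges upward and downward triangles and interchanges $G^\up$ with $G^\down$. I would induct on the number $k$ of points of $P$ in the interior of $T:=\bigtriangleup(p,q)$. If $k=0$, then $T$ is empty, so by definition $(p,q)$ is an edge of $G^\up(P)$ introduced by $T$ itself, and the single edge $(p,q)$ is the required path.

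Assume $k\ge 1$. Since $T$ is the smallest upward triangle with $p$ and $q$ on its boundary, at least one of $p,q$ is a corner of $T$; after relabeling we may assume it is $p$, and using the $120^\circ$-rotational symmetry of the family of upward triangles (which preserves both the graph $G^\up$ and the general-position hypothesis) we may assume $q$ lies in the odd cone $C_1^p$, so that $p$ is the bottom-left corner of $T$ and $T$ consists exactly of the points of $C_1^p$ whose distance from $p$ (measured along the bisector of $C_1^p$) is at most $d(p,q)$. Let $r$ be the point of $P$ closest to $p$ inside $C_1^p$; it exists since $q$ is a candidate. Then $(p,r)$ is an edge of $G^\up(P)$, the triangle $\bigtriangleup(p,r)$ introducing it consists of the points of $C_1^p$ at distance at most $d(p,r)$ from $p$, so $\bigtriangleup(p,r)$ is empty (by the choice of $r$) and is contained in $T$ (since $d(p,r)\le d(p,q)$). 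General position gives $d(p,r)<d(p,q)$ — equality would force $r=q$, making $T$ empty and contradicting $k\ge1$ — and also puts $r$ strictly inside the cone, so $r$ lies in the interior of $T$.

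I would then invoke the inductive hypothesis on the pair $(r,q)$. This needs (i) $\bigtriangleup(r,q)\subseteq T$, and (ii) the interior of $\bigtriangleup(r,q)$ contains strictly fewer points of $P$ than the interior of $T$. Claim~(ii) is immediate from (i): the interior of $\bigtriangleup(r,q)$ is an open subset of $T$, hence of the interior of $T$, and it omits the point $r$, which lies in $P$ and in the interior of $T$. Claim~(i) is the crux, and I would prove it by coordinatizing: an upward triangle is the intersection of three half-planes bounded by lines of the fixed slopes $0$, $\sqrt3$, $-\sqrt3$, so membership in it amounts to three linear functionals of the point being at most three parameters, and the smallest upward triangle through two points takes, in each of the three coordinates, the more restrictive of the two points' bounds. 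Since $r$ and $q$ both lie in $C_1^p$, they lie on the triangle-side of the two lines through $p$ that bound $C_1^p$ — these are two of the three side-lines of $T$ — so two of the three defining constraints of $\bigtriangleup(r,q)$ are at least as restrictive as those of $T$; and since $r$ is closer to $p$ than $q$ is, the third constraint of $\bigtriangleup(r,q)$, coming from the side of $T$ opposite $p$, is identical to that of $T$. Hence $\bigtriangleup(r,q)\subseteq T$.

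Finally, the inductive hypothesis supplies a path from $r$ to $q$ in $G^\up(P)$ that lies in $\bigtriangleup(r,q)$ and whose edges are introduced by triangles inside $\bigtriangleup(r,q)$. Prepending the edge $(p,r)$ gives a walk from $p$ to $q$, which we shorten to a simple path; every edge of it is introduced by a triangle inside $\bigtriangleup(p,r)\cup\bigtriangleup(r,q)\subseteq T$, and since $T$ is convex and contains all vertices of the path, the path lies in $T$. The step I expect to be the main obstacle is Claim~(i): a direct argument by cases on which cone of $r$ contains $q$ is cumbersome, but the containment becomes a one-line linear comparison once upward triangles are described through their three side-lines.
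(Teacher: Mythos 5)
Your proof is correct. The paper itself gives no proof of this lemma---it is imported verbatim from Babu \etal\cite{Babu2014}---and your induction (go to the closest point $r$ of $P$ in the cone of the corner $p$ that contains $q$, check $\bigtriangleup(p,r)$ is empty and $\bigtriangleup(r,q)\subseteq\bigtriangleup(p,q)$, recurse on $(r,q)$) is essentially the standard argument from that source; your reduction of the containment claim to comparing the three linear functionals defining an upward triangle is sound and is the cleanest way to handle it.
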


We remark that this lemma holds more generally for any convex-distance Delaunay graph.  
The second property we need has been proved in the general setting of convex-distance Delaunay graphs.  
It generalizes the fact that the (standard) Delaunay triangulation contains the 
minimum spanning tree with respect to Euclidean distances.  We state the result 
for the special case of equilateral triangles. 
For any two points $p$ and $q$ in the plane, define the weight function $w^\up(p,q)$ 
to be the area of the smallest $\up$-triangle containing $p$ and $q$.

\begin{lemma}[Aurenhammer and Paulini~\cite{Aurenhammer14}]
\label{intersection-lemma}
   The minimum spanning tree of points~$P$ with respect to the weight function $w^\up(p,q)$ is contained in $G^\up(P)$. 
\end{lemma}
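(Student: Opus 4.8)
The plan is to run the standard exchange argument showing that a minimum spanning tree is a subgraph of the associated Delaunay-type graph; the one geometric ingredient to pin down is that an obstructing point strictly shrinks the witnessing triangle, and for that a linear parametrization of $\up$-triangles is convenient. Let $u_1,u_2,u_3$ be the outward unit normals of the three sides of an upward equilateral triangle; they are $120^\circ$ apart, so $u_1+u_2+u_3=0$. For a point $x$ set $f_i(x)=u_i\cdot x$; then $f_1(x)+f_2(x)+f_3(x)=0$ for every $x$. Every $\up$-triangle has the form $T(c)=\{x: f_i(x)\le c_i,\ i=1,2,3\}$, it is a non-degenerate triangle exactly when $c_1+c_2+c_3>0$, its inradius is then $\tfrac13(c_1+c_2+c_3)$, and hence its area equals $\kappa\,(c_1+c_2+c_3)^2$ for an absolute constant $\kappa>0$. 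Since $T(c)$ contains both $p$ and $q$ iff $c_i\ge\max\{f_i(p),f_i(q)\}$ for all $i$, the smallest such triangle, call it $T^\ast(p,q)$, is $T(m)$ with $m_i=\max\{f_i(p),f_i(q)\}$; it is contained in every $\up$-triangle containing $p$ and $q$, and
\[
  w^\up(p,q)=\kappa\,M(p,q)^2,\qquad M(p,q):=\sum_{i=1}^3\max\{f_i(p),f_i(q)\}.
\]
Moreover $M(p,q)>0$ whenever $p\ne q$: it is $\ge\sum_i f_i(p)=0$, and equality would force $f_i(p)\ge f_i(q)$ for all $i$, which together with $\sum_i f_i(p)=\sum_i f_i(q)=0$ gives $p=q$. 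Finally, recall from Section~\ref{subsec:prelim} that $(p,q)$ is an edge of $G^\up(P)$ iff the interior of $T^\ast(p,q)$ contains no point of $P$.

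\emph{Shrinking lemma.} If $r\in\operatorname{int}T^\ast(p,q)$, then $w^\up(p,r)<w^\up(p,q)$ and $w^\up(r,q)<w^\up(p,q)$. Indeed, $r$ being interior means $f_i(r)<\max\{f_i(p),f_i(q)\}$ for every $i$, hence $\max\{f_i(p),f_i(r)\}\le\max\{f_i(p),f_i(q)\}$ for every $i$ and so $M(p,r)\le M(p,q)$; equality would require $\max\{f_i(p),f_i(r)\}=\max\{f_i(p),f_i(q)\}$ for all $i$, and since $f_i(r)$ is strictly smaller this forces $f_i(p)=\max\{f_i(p),f_i(q)\}$ for all $i$, i.e.\ $M(p,q)=\sum_i f_i(p)=0$, contradicting $M(p,q)>0$. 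So $M(p,r)<M(p,q)$, and symmetrically $M(r,q)<M(p,q)$; squaring and multiplying by $\kappa$ gives the claim.

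\emph{Exchange argument.} Let $\mathcal T$ be any minimum spanning tree of $P$ under $w^\up$ and let $(p,q)\in\mathcal T$. Suppose $(p,q)\notin G^\up(P)$. Then $\operatorname{int}T^\ast(p,q)$ contains some $r\in P$, and $r\notin\{p,q\}$ since $p,q$ lie on $\partial T^\ast(p,q)$. Deleting $(p,q)$ splits $\mathcal T$ into two subtrees; say $r$ lies in the one containing $p$. Then $\mathcal T-(p,q)+(r,q)$ is again a spanning tree, of total weight $w(\mathcal T)-w^\up(p,q)+w^\up(r,q)<w(\mathcal T)$ by the shrinking lemma, contradicting minimality of $\mathcal T$; if instead $r$ lies on $q$'s side, use $(p,r)$ in place of $(r,q)$. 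Hence every edge of $\mathcal T$ lies in $G^\up(P)$.

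\emph{Main obstacle.} The delicate part is the first paragraph: setting up the parametrization and the identity $f_1+f_2+f_3\equiv 0$ correctly and deducing from it both that $T^\ast(p,q)$ is the area-minimal witness sitting inside every competing $\up$-triangle and the strict positivity $M(p,q)>0$ — the latter is precisely what upgrades the shrinking inequalities from $\le$ to $<$, which is what the exchange argument consumes. General position is used only to guarantee that a non-edge $(p,q)$ has a point of $P$ in the \emph{open} triangle $T^\ast(p,q)$ rather than merely on its boundary. Everything after that is the textbook MST cut-exchange; in particular, Lemma~\ref{path-in-triangle-lemma} is not needed here.
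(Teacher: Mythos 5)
Correct. The paper does not prove this lemma---it cites it from Aurenhammer and Paulini---so there is no in-paper proof to compare against; your argument (the area of the minimal witnessing $\up$-triangle as a squared linear functional, the strict shrinking property when a point lies in its interior, and the standard MST cut-exchange) is sound and is essentially the classical ``MST $\subseteq$ Delaunay'' argument that the cited work generalizes to convex distance functions.
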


A consequence of Lemma~\ref{intersection-lemma} (as noted by Aurenhammer and 
Paulini in their more general setting) is that 
the minimum spanning tree of points $P$ with respect to the weight function $w^\up(p,q)$ 
is contained in both $G^\up(P)$ and $G^\down(P)$, because $w^\up(p,q) = w^\down(p,q)$.
In particular, this means that the intersection of $G^\up(P)$ and $G^\down(P)$ 
is connected, as was proved with a different method by Babu \etal\cite{Babu2014}.

\para{The Tutte-Berge Matching Theorem.}
Let $G$ be a graph and let $S$ be an arbitrary subset of vertices of $G$. 
Removing $S$ splits $G$ into a number, $\comp(G \setminus S)$, of connected components.
Let $\odd(G\setminus S)$ denote the number of odd components of $G\setminus S$, i.e., 
components with an odd number of vertices. In 1947, Tutte~\cite{Tutte1947} 
characterized graphs that have a (near-)perfect matching as exactly those graphs 
that have at most $|S|$ odd components for any subset $S$. In 1957, Berge~\cite{Berge1958} 
extended this result to a formula (today known as the Tutte-Berge formula) for 
the size of maximum matchings in graphs. The following is an alternate way of 
stating this formula in terms of the number of unmatched vertices, i.e., vertices 
that are not matched by the matching.  

\begin{theorem}[Tutte-Berge formula; Berge~\cite{Berge1958}]
	\label{tutte-berge-thr}
	The number of unmatched vertices of a maximum matching in $G$ is equal to the maximum over subsets $S \subseteq V$ of $\odd(G \setminus S)-|S|$.  
\end{theorem}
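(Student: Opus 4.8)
The plan is to derive the deficiency form of Theorem~\ref{tutte-berge-thr} from Tutte's $1$-factor theorem (recalled above) via the standard ``padding'' construction, so that the only genuine work is a short case analysis. Write $d(G) := \max_{S\subseteq V}(\odd(G\setminus S)-|S|)$; the goal is to show the number of vertices missed by a maximum matching equals $d(G)$. Note first that $d(G)\ge 0$ (take $S=\emptyset$) and $d(G)\equiv |V|\pmod 2$, since $\odd(G\setminus S)\equiv |V|-|S|\pmod 2$.

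For the easy inequality, I would show that \emph{every} matching $M$ of $G$ misses at least $\odd(G\setminus S)-|S|$ vertices, for each fixed $S$. In any odd component $C$ of $G\setminus S$, the edges of $M$ with both ends in $C$ cover an even number of vertices of $C$, so some vertex of $C$ is either unmatched or matched along an edge leaving $C$; as $C$ is a connected component of $G\setminus S$, such an edge must go to $S$. Distinct odd components use distinct vertices of $S$ for this, so $M$ misses at least $\odd(G\setminus S)-|S|$ vertices. Maximizing over $S$ shows a maximum matching misses at least $d(G)$ vertices.

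For the reverse inequality I would produce a matching missing exactly $k:=d(G)$ vertices. Form $G'$ by adding a clique $U$ on $k$ new vertices, each also joined to every vertex of $V$. The key claim is that $G$ has a matching missing at most $k$ vertices if and only if $G'$ has a perfect matching: given a matching of $G$ missing a set $W$ (so $|W|\le k$ and $|W|\equiv k\pmod 2$), match $W$ injectively into $U$ and pair up the remaining even number of $U$-vertices inside the clique; conversely, deleting the $U$-incident edges from a perfect matching of $G'$ leaves a matching of $G$ missing at most $|U|=k$ vertices. It then remains to verify Tutte's condition $\odd(G'\setminus S')\le|S'|$ for all $S'\subseteq V(G')$. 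If $U\subseteq S'$, write $S'=U\cup S$ with $S\subseteq V$; then $G'\setminus S'=G\setminus S$, so $\odd(G'\setminus S')=\odd(G\setminus S)\le|S|+d(G)=|S'|$ by definition of $k$. If $U\not\subseteq S'$, some $u\in U$ survives and is adjacent to every remaining vertex, so $G'\setminus S'$ is connected (or empty); since $|V(G')|=|V|+k$ is even, the parity of this single component equals the parity of $|S'|$, whence $\odd(G'\setminus S')\le 1\le|S'|$ when $S'\neq\emptyset$ and $\odd(G'\setminus S')=0$ when $S'=\emptyset$. Thus Tutte's condition holds, $G'$ has a perfect matching, $G$ has a matching missing at most $k$ vertices, and combined with the easy inequality a maximum matching misses exactly $k=d(G)$ vertices.

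The main obstacle is keeping the case analysis for Tutte's condition on $G'$ honest, especially the small/empty $S'$ cases and the parity bookkeeping ($k\equiv|V|\pmod 2$) that makes the padding work; once Tutte's theorem is taken as given, everything else is routine. (Alternatively one can give a self-contained induction on $|E(G)|$ using an alternating-path analysis, but the reduction above is shorter.)
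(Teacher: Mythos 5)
The paper does not prove this statement---it is quoted as a classical result with a citation to Berge, so there is no in-paper argument to compare against. Your proof is the standard and correct derivation of the Tutte--Berge deficiency formula from Tutte's $1$-factor theorem: the easy direction (each odd component of $G\setminus S$ donates an unmatched vertex or consumes a private vertex of $S$) is right, and the padding construction with the clique $U$ of size $k=d(G)$, together with the parity observation $d(G)\equiv|V|\pmod 2$ and the two-case verification of Tutte's condition on $G'$, is complete and handles the degenerate case $k=0$ (where $G'=G$) correctly.
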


To obtain a lower bound on the size of a maximum matching it suffices, by 
Theorem~\ref{tutte-berge-thr}, to find an upper bound on $\odd(G \setminus S)-|S|$ 
that holds for any $S$. We will use this approach in our proofs of 
Theorems~\ref{thm:matching-bound} and~\ref{thm:reductions}.
In fact, as in Dillencourt's proof~\cite{Dillencourt1990} that Delaunay graphs 
have perfect matchings we will find an upper bound on $\comp(G \setminus S)-|S|$ 
that holds for any $S$, i.e., we establish a bound on the \emph{toughness} of 
the graph~\cite{bauer2006toughness}.

\section{Bounding the Size of a Matching}\label{sec:matching-bound}

In this section, we prove Theorem~\ref{thm:matching-bound}.  
Let $P$ be a set of $n$ points in the plane and let $G^\both(P)$ be the $\Theta_6$-graph on $P$.
We will prove that $G^\both(P)$ contains a matching of size at least $(3n-8)/7$. 
As implied by Theorem~\ref{tutte-berge-thr}, in order to prove a lower bound on the size of maximum matching in $G^\both(P)$, 
it suffices to prove an upper bound on $\odd(G^\both(P) \setminus S)-|S|$ that holds for any subset $S$ of $P$. 
Since it is hard to argue about odd components, we will 
in fact prove an upper bound on $\comp(G^\both(P) \setminus S) - |S|$.  Such a 
bound applies to $\odd(G^\both(P) \setminus S)-|S|$ 
because $\odd(G^\both(P) \setminus S)\leq \comp(G^\both(P) \setminus S)$.  

Our proof will depend on an analysis of the faces of $G^\up(P) \setminus S$ and
$G^\down(P) \setminus S$ for which we need some preliminary results.
Consider a planar graph $G$ with a fixed planar embedding.
Such an embedding divides the plane
into connected regions, called {\em faces}. For every face~$f$ of~$G$, we define its {\em degree} 
as the number of triangles in a triangulation of $f$ plus 2; see Figure~\ref{deg-fig} for some examples. 
A similar notion of degree has been used in~\cite{Biedl2004}. We emphasis that 
we do not really add any edges to $G$; these edges are imaginary, just to define the degree of a face.  
Let ${\cal F}_d(G)$ denote the set of faces of $G$ of degree $d$. 
An easy counting argument shows that if $|V|\geq 3$, then 
$\sum_{d\geq 3} (d-2) |{\cal F}_d(G)|=  2|V|-4$,
since a face of degree~$d$ gives rise to $d-2$ faces in a triangulation of $G$, which has $2|V|-4$ faces.

\begin{figure}[t]
	\centering
	\includegraphics[width=.8\columnwidth]{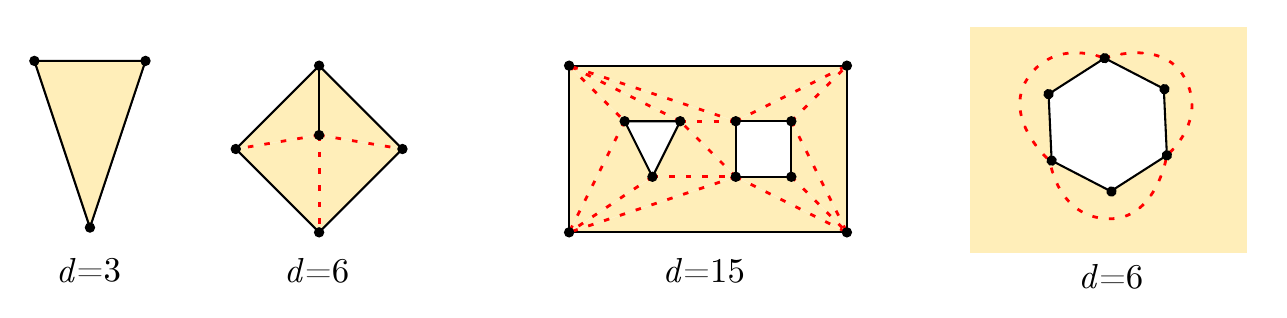}
	\caption{The notion of degree of a face.  }
	\label{deg-fig}
\end{figure}

We will utilize the following lemma that Dillencourt used in his proof that every Delaunay triangulation contains a (near-)perfect matching.
Let $G[S]$ denote the subgraph of $G$ that is induced by a subset $S$ of its vertices.

\begin{lemma}[Dillencourt~\cite{Dillencourt1990}, Lemma 3.4]
\label{lem:face_double}
Let $G$ be a triangulated planar graph and let $S$ be a subset of 
vertices of $G$. Then every face of $G[S]$ contains at most one component of~$G\setminus S$.
\end{lemma}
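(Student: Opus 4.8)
The plan is to argue by contradiction: suppose some face $f$ of $G[S]$ contains two distinct components $K_1$ and $K_2$ of $G\setminus S$. Since $G$ is triangulated and planar, every edge of $G$ not incident to a vertex of $S$ lies in the interior of some face of $G[S]$, and since $K_1$ and $K_2$ are connected, each of them lies entirely inside a single face of $G[S]$; by assumption that face is $f$ for both. First I would pick a vertex $u\in K_1$ and a vertex $v\in K_2$. Because $G$ is connected (a triangulation on $\ge 3$ vertices is $2$-connected, but connectivity suffices here) there is a path from $u$ to $v$ in $G$; such a path must leave $K_1$, and to do so it must pass through a vertex of $S$ on the boundary of $f$. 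So both $K_1$ and $K_2$ have a neighbor on $\partial f$.

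The heart of the argument is a planarity/Jordan-curve obstruction. Let me describe the boundary walk of $f$ as a closed walk $W$ in $G[S]$. I claim that because $G$ is triangulated, the neighbors of $K_1$ on $\partial f$ and the neighbors of $K_2$ on $\partial f$ cannot ``interleave'' along $W$ without forcing a crossing. More precisely, consider the set $A\subseteq V(W)$ of vertices of $S$ adjacent (in $G$) to some vertex of $K_1$, and $B\subseteq V(W)$ of those adjacent to some vertex of $K_2$. Contract $K_1$ to a single point $x_1$ placed in the interior of $f$ and $K_2$ to a single point $x_2$, also in the interior of $f$; the resulting drawing is still planar, with $x_1$ joined to every vertex of $A$ and $x_2$ joined to every vertex of $B$, all inside the region bounded by $W$. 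Now use the fact that $G$ is a triangulation: every face of $G$ (the original) is a triangle, so in particular the ``local'' structure around $K_1$ forces $A$, together with the portions of $W$ between consecutive elements of $A$, to bound a region; and likewise for $B$. If $A$ and $B$ interleave cyclically around $W$ — i.e., there exist $a,a'\in A$ and $b,b'\in B$ appearing in the cyclic order $a,b,a',b'$ along $W$ — then the path $a\,x_1\,a'$ and the path $b\,x_2\,b'$ must cross inside $f$, contradicting planarity. Hence $A$ and $B$ do not interleave, so there is an arc of $W$ separating $A$ from $B$. But then I would derive a contradiction with the triangulation property: some face of $G$ incident to that separating arc would fail to be a triangle, because on one side it would ``see'' $K_1$ and on the other $K_2$ with no edge between them and no third vertex of $S$ available to complete a triangle. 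Spelling out this last step carefully — identifying the precise non-triangular face — is where the real work lies.

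I expect the main obstacle to be the bookkeeping in that final step: one must show that non-interleaving of $A$ and $B$ genuinely produces a face of $G$ bounded by more than three edges, and this requires being careful about the case $|A|=1$ or $|B|=1$, and about vertices of $S$ on $\partial f$ that are adjacent to neither component. A clean way to handle it is to take a ``closest'' pair: among all vertices of $K_1$ and $K_2$, and all faces of $G$ lying inside $f$, find a face $g$ of $G$ whose boundary touches both the $K_1$-side and the $K_2$-side of the separating arc; since $g$ is a triangle it has three vertices, and one checks that these three vertices cannot simultaneously be consistent with $g$ separating $K_1$ from $K_2$ — forcing an edge of $G$ between $K_1$ and $K_2$, contradicting that they are distinct components of $G\setminus S$. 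Throughout, I would lean on Lemma~\ref{path-in-triangle-lemma}-style connectivity only implicitly; the argument is purely combinatorial-topological and uses nothing about $\Theta_6$-graphs beyond $G$ being a planar triangulation, which is exactly the generality in which Dillencourt states it.
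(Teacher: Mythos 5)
First, a point of reference: the paper does not prove this lemma at all---it is imported verbatim as Lemma~3.4 of Dillencourt~\cite{Dillencourt1990}---so your attempt has to be measured against the standard argument rather than against anything in this paper. Measured that way, there is a genuine gap, and it sits exactly where you flag it. The reduction to ``$A$ and $B$ do not interleave along $W$'' is essentially sound (modulo the fact that a face of $G[S]$ may have a boundary walk with repeated vertices, or even several boundary components when $G[S]$ is disconnected, so ``cyclic order along $W$'' needs care), but non-interleaving is not by itself contradictory: it is a perfectly consistent configuration in an abstract plane graph, and the contradiction must come from the triangulation hypothesis. The step ``some face of $G$ incident to the separating arc fails to be a triangle'' is never established---you neither exhibit such a face nor prove that the ``closest'' face $g$ you invoke exists or has the properties you attribute to it. As written, the argument shows only that the $S$-neighbors of $K_1$ and of $K_2$ occupy disjoint arcs of $\partial f$, which is true but does not finish the proof.

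The missing idea is the one you brush past in your parenthetical about ``the local structure around $K_1$,'' and once it is made precise the interleaving detour becomes unnecessary. Let $U$ be the union of the closed triangular faces of $G$ having a vertex in $K_1$. An edge $e$ on the topological boundary of $U$ cannot have an endpoint in $K_1$ (else both faces incident to $e$ would meet $K_1$ and $e$ would be interior to $U$), so the face of $U$ containing $e$ has its third vertex in $K_1$; hence both endpoints of $e$ are neighbors of $K_1$, hence lie in $S$, hence $e$ is an edge of $G[S]$ and must lie on $\partial f$ because the open face $f$ meets no edge of $G[S]$. Therefore $U\cap f$ is a nonempty open and closed subset of the connected set $f$, so $f\subseteq U$. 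But every vertex of every triangle of $U$ lies in $K_1\cup S$, so no vertex of $K_2$ lies in $U$, contradicting $K_2\subseteq f\subseteq U$. (Equivalently: every triangle of $G$ inside $f$ has a vertex in $K_1$, and symmetrically one in $K_2$, forcing an edge of $G$ between the two components.) This is the step where ``triangulated'' is genuinely used; I would replace the second half of your argument with this direct computation of $\partial U$.
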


We aim to apply this result to $G^\up(P)$ and $G^\down(P)$. 
As noted in Section~\ref{subsec:background}, the interior faces of $G^\up(P)$ 
and $G^\down(P)$ are triangles, but their outer faces need not be the convex hull of~$P$. 
For this reason, and also for Lemma~\ref{lem:f3_4} below,
we add a set $A=\{a_1,\dots,a_6\}$ of 
{\em surrounding points} as follows.  Find the smallest 
$\up$-triangle~$T^\up$ and $\down$-triangle~$T^\down$ containing
all points of~$P$.  Let $\mathcal{R}(P)$ be the region $T^\up\cup T^\down$.
(we will need this definition again in Section~\ref{sec:blocking}). 
Observe that all of the empty triangles that introduce edges
of $G^\both(P)$ lie in $\mathcal{R}(P)$, so adding points outside $\mathcal{R}(P)$ does not 
remove any edge from the graph.  
We now place points $a_1,\dots,a_6$ near the corners of $T^\up$  and $T^\down$ 
(see Figure~\ref{augment-fig}): at each corner, place a point in the cone opposite to the cone that contains the triangle,
and name the points in such a way that every point of $P$ has $a_i$ in cone $C_i$.  

Now fix a set $S$ for which we want to bound $\comp(G^\both(P)\setminus S) - |S|$, and define $S_A = S\cup A$.
Pick an arbitrary representative point from every connected component of 
$G^\both(P)\setminus S$, and let $\RC$ be the set of these points,
so $|\RC|=\comp(G^\both(P)\setminus S)$.

Define $G_A^\up = G^\up(P\cup A)$  and consider its subgraph
$G_A^\up[S_A]$ induced by $S_A$.  By construction, the outer face of 
both $G_A^\up$ and $G_A^\up[S_A]$ is the hexagon formed by $A$;
we add three graph edges (not segments) to triangulate the outer face,   
so that $G_A^\up$ is triangulated.  
Note that none of the points of $P$ (and in particular therefore no points of $\RC$) are inside
the four newly introduced triangular faces.

\begin{figure}[t]
  \centering
  \subcaptionbox{\label{augment-fig}}{\includegraphics[page=1,scale=.75]{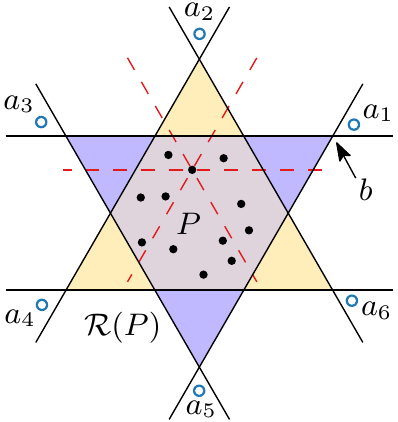}}
  \hfil
  \subcaptionbox{\label{lemf3_4-fig}}{\includegraphics[page=1,scale=.75]{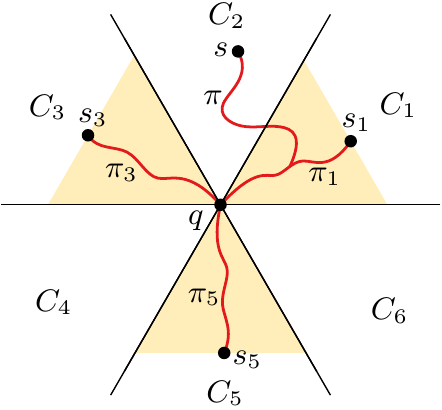}}
	\caption{(a) Augmentation of $P$: the shaded region is $\mathcal{R}(P)$, and $A=\{a_1,\dots,a_6\}$. (b)~Illustration for the proof of Lemma~\ref{lem:f3_4}.}
\end{figure}

Let $f_d^\up$ be the number of faces of degree $d$ in $G_A^\up[S_A]$ that contain some point of~$\RC$. 
We define $f_{4+}^\up=\sum_{d\ge 4} f_d^\up$. 
Since all faces of $G_A^\up$ are now triangles, (after we added those edges),
Lemma~\ref{lem:face_double} applies and
every face of $G_A^\up[S_A]$ contains at most one component, hence at most one point of $\RC$. Therefore,
\begin{equation}
\label{eq2}
|\RC|=f_3^\up + f_{4+}^\up\text{\quad\quad and similarly\quad\quad}|\RC|=f_3^\down + f_{4+}^\down,
\end{equation}
where $f_d^\down$ is defined in a symmetric manner on graph $G_A^\down[S_A]$.

Let ${\cal F}_d$ be the set of faces of degree $d$ in $G_A^\up[S_A]$ and observe 
that, since no point of $\RC$ appears in the four triangles outside
the hexagon of $A$, we have $f_3^\up \leq |{\cal F}_3| - 4$.  As a consequence,
\begin{align}
\label{eq3}
f_3^\up + 2 f_{4+}^\up &\leq 
\sum_{d\geq 3} (d-2)f_d^\up \leq 
\sum_{d\geq 3} (d-2)|{\cal F}_d| - 4\nonumber\\
&\leq 2|V(G_A^\up[S_A])|-4 -4 = 2|S|+2|A|-8 = 2|S|+4
\end{align}
and similarly
$f_3^\down + 2 f_{4+}^\down \leq 2|S|+4$.

The crucial insight for getting an improved matching bound is that no component 
can reside inside a face of degree 3 in both $G^\up$ and $G^\down$.
Formally, we show:  
   
\begin{lemma}
\label{lem:f3_4}
We have $f_3^\triangle\leq f_{4+}^\down$ and $f_3^\down \leq
f_{4+}^\triangle$.
\end{lemma}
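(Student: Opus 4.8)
The plan is to reduce everything to the single statement that \emph{no point of $\RC$ lies inside a degree-$3$ face of both $G_A^\up[S_A]$ and $G_A^\down[S_A]$}; both claimed inequalities then follow at once. Indeed, by Lemma~\ref{lem:face_double} each degree-$3$ face of $G_A^\up[S_A]$ counted by $f_3^\up$ contains exactly one point of $\RC$, so $f_3^\up$ equals the number of points of $\RC$ lying in degree-$3$ faces of $G_A^\up[S_A]$; likewise $f_{4+}^\down$ equals the number of points of $\RC$ lying in faces of $G_A^\down[S_A]$ of degree $\ge 4$. Since every point of $\RC$ lies in exactly one face of $G_A^\down[S_A]$, the highlighted statement says precisely that every point counted by $f_3^\up$ is counted by $f_{4+}^\down$, giving $f_3^\up \le f_{4+}^\down$; the inequality $f_3^\down \le f_{4+}^\up$ is symmetric.

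So assume for contradiction that some $q\in\RC$ lies inside a degree-$3$ face $T^\up = s_1s_2s_3$ of $G_A^\up[S_A]$ and inside a degree-$3$ face $T^\down = t_1t_2t_3$ of $G_A^\down[S_A]$. Because no point of $\RC$ lies in the four triangles outside the hexagon of $A$, each $s_is_j$ is a genuine edge of $G_A^\up = G^\up(P\cup A)$, so the triangle $\up(s_i,s_j)$ contains no point of $P\cup A$ besides $s_i,s_j$ (and symmetrically for the $\down(t_i,t_j)$). The first substantive step is a structural claim: the component $K^\up$ of $G_A^\up\setminus S_A$ containing $q$ lies in the open triangle $T^\up$ and is adjacent in $G_A^\up$ to all three corners $s_1,s_2,s_3$ (and symmetrically $K^\down$ to $t_1,t_2,t_3$). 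To prove this, observe that inside $\overline{T^\up}$ the triangulation $G_A^\up$ restricts to a triangulated polygon with outer boundary $s_1s_2s_3$; the $G_A^\up$-triangle sharing the edge $s_is_j$ from inside cannot be $s_1s_2s_3$ itself (that would leave no vertex of $G_A^\up$, hence no room for $q$, inside $T^\up$), so its third vertex is a point of $P\setminus S$ strictly inside $T^\up$, which by Lemma~\ref{lem:face_double} lies in the same component as $q$ and is adjacent to both $s_i$ and $s_j$. Hence there are three paths $U_1,U_2,U_3$ from $q$ to $s_1,s_2,s_3$ using only edges of $G_A^\up$ and internal vertices in $K^\up\subseteq$ open $T^\up$; by planarity of $G_A^\up$ they (together with the triangle boundary) cut $\overline{T^\up}$ into three regions $R_{12},R_{23},R_{31}$, where $R_{ij}$ is bounded by $U_i$, $U_j$, and the edge $s_is_j$. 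Symmetrically, three $G_A^\down$-paths $D_1,D_2,D_3$ cut $\overline{T^\down}$ into regions along the edges $t_it_j$.

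The remaining task — and the step I expect to be the main obstacle — is to turn these two ``pinwheels'' at $q$ into a contradiction. The geometric handle is that the only points of $S_A$ in $\overline{T^\up}$ are the corners $s_1,s_2,s_3$, and $S_A$ also avoids the interiors of the empty triangles $\up(s_i,s_j)$ (symmetrically for $T^\down$). First I would note that each corner $t_k$ either equals one of $s_1,s_2,s_3$ or lies strictly outside $\overline{T^\up}$ (the open triangle $T^\up$ contains no point of $S_A$), so $D_k$ either terminates on $\partial T^\up$ or must cross $\partial T^\up$; in the latter case some edge of $D_k$ (a $G_A^\down$-edge) crosses one of the up-edges $s_is_j$. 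Emptiness of $\up(s_i,s_j)$, combined with emptiness of the $\down$-triangle of the crossing edge, pins down the local picture of such a crossing (a half-$\Theta_6$ ``crossing lemma'') and in particular constrains which region $R_{ij}$ the path $D_k$ can be in just before it crosses. Tracking $D_1,D_2,D_3$ — which leave $q$ into the three sectors of the up-pinwheel and must reach $t_1,t_2,t_3$ — through these constrained crossings, one is forced into a configuration in which $q$ is separated from two of the $t_k$ only across an empty $\up$-triangle, i.e.\ in which $q$ would itself lie inside some $\up(s_i,s_j)$ or some $\down(t_i,t_j)$. Since $q\in P\setminus\{s_1,s_2,s_3,t_1,t_2,t_3\}$, that contradicts emptiness and finishes the proof. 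A slicker route I would try first is to show directly that $\overline{T^\up}\cup\bigcup_{i<j}\up(s_i,s_j)$ and $\overline{T^\down}\cup\bigcup_{i<j}\down(t_i,t_j)$ cannot both keep all points of $S_A$ off their interiors while their interiors share a point of $P$; this sidesteps the crossing case analysis, which is the delicate part.
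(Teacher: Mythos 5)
Your opening reduction is exactly the paper's: both claimed inequalities follow once you show that no point of $\RC$ lies in a degree-$3$ face of $G_A^\up[S_A]$ and simultaneously in a degree-$3$ face of $G_A^\down[S_A]$. That part is fine. The problem is that you never actually prove this statement. You set up two ``pinwheels'' of paths from $\rc$ to the corners of $T^\up$ and $T^\down$, and then the decisive step is announced rather than executed: ``the step I expect to be the main obstacle,'' ``one is forced into a configuration in which\dots,'' ``a slicker route I would try first\dots.'' The crossing analysis of the down-paths $D_k$ against the empty up-triangles $\up(s_i,s_j)$ is precisely where all the difficulty lives, and nothing in the sketch establishes that the tracking argument terminates in the claimed configuration, nor that the ``slicker route'' is true. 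As written, the proposal is a plan with a hole at its center, not a proof.

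For comparison, the paper avoids any contradiction or crossing analysis by a direct pigeonhole argument using the two structural lemmas. For a point $\rc\in\RC$, the closest $S_A$-points $s_1,s_3,s_5$ in the three odd cones of $\rc$ exist (because $A\subseteq S_A$ surrounds $P$), and Lemma~\ref{path-in-triangle-lemma} gives paths in $G^\up$ from $\rc$ to each of them whose interior vertices avoid $S_A$; hence $s_1,s_3,s_5$ are three distinct vertices on the boundary of the face $F^\up$ containing $\rc$, and symmetrically the even cones put three distinct vertices on the boundary of $F^\down$. The fourth boundary vertex comes from Lemma~\ref{intersection-lemma}: the minimum-weight spanning tree of $P\cup A$ lies in \emph{both} $G_A^\up$ and $G_A^\down$, so the tree path from $\rc$ to the first vertex $s$ of $S_A$ it meets is a path in both graphs; if $s$ sits in an even cone of $\rc$ it is distinct from $s_1,s_3,s_5$ and forces $\deg(F^\up)\ge 4$, and if it sits in an odd cone it forces $\deg(F^\down)\ge 4$. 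This parity dichotomy is the idea your proposal is missing, and it is what makes the proof short; if you want to salvage your geometric route you would need to supply a genuine ``crossing lemma'' for half-$\Theta_6$ edges and a complete case analysis, which is substantially more work than the paper's argument.
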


\begin{proof}
Consider any point $\rc\in \RC$, hence $\rc\not\in S_A$. 
Let $F^\up$ and $F^\down$ be the faces of~$G_A^\up[S_A]$ and $G_A^\down[S_A]$ that contain $\rc$, respectively. 
It suffices to show that one of~$F^\up$ and~$F^\down$ has degree at least 4.

By Lemma~\ref{intersection-lemma}, the minimum-weight spanning tree $T$ of $P\cup A$ 
belongs to both~$G_A^\up$ and~$G_A^\down$. Find a path $\pi$ 
in $T$ that connects $\rc$ to some point $s\in S_A$
such that no vertex of $\pi$ except $s$ belongs to $S_A$.

Assume first that $s$ is in a cone with even index.
Let $s_1, s_3,s_5$ be the points of $S_A$ that are closest to $\rc$ in 
cones $C_1,C_3,C_5$, respectively; since $A\subseteq S_A$, such points $s_i$ exist. 
Refer to Figure~\ref{lemf3_4-fig}. By Lemma~\ref{path-in-triangle-lemma}, for 
every $i\in\{1,3,5\}$, there exists a path $\pi_i$ between $\rc$ and $s_i$ 
in~$G^\up$ that lies fully in $\bigtriangleup(\rc,s_i)$. By our choices of $s_i$, 
no vertex of $\pi_i$ except $s_i$ is in $S_A$.
  
So we have four (not necessarily disjoint) paths $\pi,\pi_1,\pi_3,\pi_5$ in $G^\up$ 
that begin at $\rc$ and end at four points $s,s_1,s_3,s_5$ of $S_A$. 
These points are distinct because they belong to four different cones of $\rc$. 
Furthermore, intermediate points of these paths are not in $S_A$. 
This implies that $s, s_1, s_3, s_5$ belong to the boundary of the same face $F^\up$ of $G^\up[S_A]$.
In consequence, $F^\up$ has degree at least 4.

Similarly, if $s$ is in a cone with odd index, 
then $F^\down$ has degree at least 4, proving the claim.
\end{proof}

Now we have tools to prove an upper bound on the number of unmatched vertices and, 
more generally, the toughness of a $\Theta_6$-graph. 

\begin{lemma}
For any $S\subseteq P$, we have $\comp(G^\both(P)\setminus S) - |S| \leq  (|P|+16)/7$.
\end{lemma}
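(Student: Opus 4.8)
The plan is to combine the four relations already established in this section — the counting identities~\eqref{eq2}, the two face-degree bounds~\eqref{eq3}, and Lemma~\ref{lem:f3_4} — with one more elementary observation: since $\RC$ contains exactly one vertex per connected component of $G^\both(P)\setminus S$ and all these vertices lie in $P\setminus S$, we have $|\RC|+|S|\le |P|$. Write $r=|\RC|=\comp(G^\both(P)\setminus S)$ and $s=|S|$; the goal is to show $r-s\le (|P|+16)/7$.

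First I would eliminate the face-count quantities. From $f_3^\up+f_{4+}^\up=r$ together with $f_3^\up+2f_{4+}^\up\le 2s+4$ we obtain $f_{4+}^\up\le 2s+4-r$, and symmetrically $f_{4+}^\down\le 2s+4-r$. Lemma~\ref{lem:f3_4} gives $f_3^\up\le f_{4+}^\down$; substituting $f_3^\up=r-f_{4+}^\up$ yields $r-f_{4+}^\up\le f_{4+}^\down\le 2s+4-r$, hence $f_{4+}^\up\ge 2r-2s-4$. Comparing the lower and upper bounds on $f_{4+}^\up$ forces $2r-2s-4\le 2s+4-r$, i.e.
\[
  3r\le 4s+8 .
\]
(The same inequality also drops out of the symmetric pair $f_3^\down\le f_{4+}^\up$ and $f_{4+}^\down\le 2s+4-r$, so in fact only one half of Lemma~\ref{lem:f3_4} is needed.)

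The rest is arithmetic. From $3r\le 4s+8$ we get $s\ge (3r-8)/4$ (this also holds trivially when $3r-8\le 0$, since $s\ge 0$), so $r-s\le r-(3r-8)/4=(r+8)/4$. Feeding $s\ge(3r-8)/4$ into $r+s\le |P|$ gives $r+(3r-8)/4\le |P|$, that is $7r\le 4|P|+8$. Since $(r+8)/4$ is increasing in $r$,
\[
  r-s\ \le\ \frac{r+8}{4}\ \le\ \frac{\,(4|P|+8)/7+8\,}{4}\ =\ \frac{|P|+16}{7},
\]
which is the desired bound; no case analysis on the size of $P$ or of the components is required.

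I do not expect a serious obstacle: all of the geometry (the surrounding points~$A$, the face-degree accounting, and especially the structural fact of Lemma~\ref{lem:f3_4} that a component cannot sit inside a triangular face of \emph{both} $G^\up$ and $G^\down$) has already been done, and what remains is a short linear-programming-style manipulation. The one point worth emphasising is that the constraint $|\RC|+|S|\le|P|$ is essential: using only the face relations one gets $r-s\le(s+8)/3$, which together with $s\le|P|$ merely recovers the previously known $\approx|P|/3$ matching bound; it is the interplay between $3r\le 4s+8$ and $r+s\le|P|$ that yields the coefficient $\tfrac17$.
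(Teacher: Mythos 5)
Your proof is correct and uses the same ingredients as the paper's own argument --- the identities~\eqref{eq2}, the bounds~\eqref{eq3}, Lemma~\ref{lem:f3_4}, and $|\RC|+|S|\le |P|$ --- combined by essentially the same linear accounting, merely reorganized to pass through the intermediate inequality $3|\RC|\le 4|S|+8$ instead of a single weighted sum. Your side observation that one half of Lemma~\ref{lem:f3_4} already suffices is accurate but does not change the bound.
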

\begin{proofWithFormula}
Recall that we fixed a set $\RC$ of points in $P\setminus S$ with $|\RC|=\comp(G^\both(P)\setminus S)$.
So $n=|P|\geq |S|+|\RC|$, or equivalently $n-|\RC|-|S|\geq 0$.
Combining this with the above inequalities, we get
\begin{align*}
	7\left(\comp(G^\both(P)\setminus S)-|S|\right) 
	& \leq  7|\RC| - 7|S| + (n-|\RC|-|S|) &\\
	& =  n + 3|\RC| + 3|\RC| - 8|S| &\\
	\mbox{(by \eqref{eq2})\quad\qquad} & =  n + 3\left(f_3^\up+f_{4+}^\up\right) + 3\left(f_3^\down+f_{4+}^\down\right)  - 8|S| \\
	\mbox{(by Lemma~\ref{lem:f3_4})\quad\qquad} & \leq  n + 2f_3^\triangle+4f_{4+}^\triangle + 2f_3^\down+4f_{4+}^\down  - 8|S| \\
	\mbox{(by \eqref{eq3})\quad\qquad} & \leq  n + (4|S|+8) + (4|S|+8) - 8|S| \\
		& =  n + 16. \tag*{\qed}
\end{align*}
\end{proofWithFormula}

Therefore, $\odd(G^\both(P)\setminus S)-|S| \leq \comp(G^\both(P)\setminus S)-|S| \leq(n+16)/7$. 
In consequence of the Tutte-Berge formula, 
therefore any maximum matching $M$ of $G^\both(P)$ has at most $(n+16)/7$ unmatched
vertices, hence at least $(6n-16)/7$ matched vertices and $|M|\geq (3n-8)/7$.
This completes the proof of Theorem~\ref{thm:matching-bound}.

\begin{figure}[t]
	\centering
	\includegraphics[scale=.75]{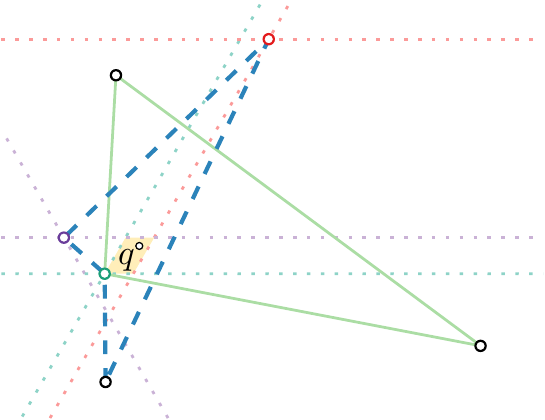}
	\caption{A set $P$ of seven points and a subset $S$ (the six larger points). 
    The graph $G^\both(P)\setminus S$ contains a singleton-component $\rc$ which 
    lies in a face of degree 3 in $G^\down[S]$ (green solid edges) and a face of 
    degree 4 in $G^\up[S]$ (blue dashed edges).}
	\label{3-4-fig}
\end{figure}

\para{Remark.}
If we knew $f_3^\triangle\leq f_{5+}^\down$ and $f_3^\down \leq f_{5+}^\triangle$ 
(where $f_{5+}^\up=\sum_{d\geq 5} f_d^\up$ etc.), then a similar analysis would 
show $\odd(G^\both(P)\setminus S)-|S|\leq 4$, which would imply 
Conjecture~\ref{conj:perfect-matching} except for a small constant term. 
However, Figure~\ref{3-4-fig} shows an example where a point $\rc\not\in S$ lies 
in a face of degree 3 in $G_A^\down$ and a face of degree 4 in $G_A^\up$, so our 
proof-approach cannot be used to prove such a claim.

\section{The Relationship Between Blocking Sets and Matchings}\label{sec:blocking}

In this section, we prove Theorem~\ref{thm:reductions}---that a lower bound on 
the blocking size function $\beta(n)$ implies a lower bound on the 
size $\mu(n)$ of a maximum matching, and vice versa.

\begin{lemma}
	\label{increasing-lemma}
	For any $n\geq 1$, we have $\beta(n+1)\leq \beta(n)+1$.
\end{lemma}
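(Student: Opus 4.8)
The plan is to prove the slightly stronger statement $\beta(P')\le\beta(P)+1$ for a well-chosen $(n+1)$-point set $P'$, by taking an extremal $n$-point set and adjoining one point placed far away together with one extra blocking point. Concretely, fix a set $P$ of $n$ points with $\beta(P)=\beta(n)$ and an optimal blocking set $B$ of $G^\both(P)$, so $|B|=\beta(n)$, and let $P'=P\cup\{p'\}$ where $p'$ is a generic point placed very far below all of $P\cup B$. The first (easy) step is to observe that $B$ already blocks every edge of $G^\both(P')$ both of whose endpoints lie in $P$: adjoining a vertex can never create an edge between two old vertices (a triangle that is empty with respect to $P'$ is in particular empty with respect to $P$), and for an old edge the triangles $\up(p,q)$ and $\down(p,q)$ are unchanged and still contain a point of $B$. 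Hence only the edges incident to $p'$ remain to be destroyed.

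The second step is to identify those new edges and kill them with one point. Since $p'$ lies very far below, all of $P$ lies inside the single cone $C_2^{p'}$, while $p'$ lies inside the cone $C_5^{q}$ of every $q\in P$. Feeding this into the cone characterisation of $\Theta_6$-edges, one checks that every edge incident to $p'$ has the form $(q,p')$ for some $q\in P$ whose downward cone $C_5^{q}$ contains no point of $P$, and that such an edge is present only because the large upward triangle $\up(q,p')$ — the one with apex $q$ and base on the horizontal line through $p'$ — is empty; the companion triangle $\down(q,p')$ is automatically non-empty, since it contains the lowest point of $P$ (when $q$ is not the lowest) or else has its apex at $p'$ (when $q$ is the lowest). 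The point of this description is that all of these triangles $\up(q,p')$ share a common base line, the horizontal through $p'$, and — because that base is enormously long compared with the horizontal extent of $P$ — the point $p'$ lies in the relative interior of each of these bases. Consequently a single new point $z$ placed just above $p'$ lies in the interior of $\up(q,p')$ for every relevant $q$, and also in the downward triangle associated with the lowest point. Adding $z$ to $B$ therefore destroys every edge incident to $p'$, so $B\cup\{z\}$ blocks $G^\both(P')$ and has size $\beta(n)+1$. Since $|P'|=n+1$, this gives $\beta(n+1)\le\beta(P')\le\beta(n)+1$.

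I expect the main obstacle to be the bookkeeping in the second step: carefully listing which triangles can introduce an edge incident to $p'$ (those coming from $p'$'s own cone as well as those coming from a cone of an old point), and then verifying that the single point $z$ punctures all of them — in particular that $p'$ really lies in the relative interior, not at a corner, of the base of each triangle $\up(q,p')$. The remaining details are routine: choosing $p'$ and $z$ generically so that $P'$ stays in general position (lines through $p'$ and a point of $P$ are nearly vertical, so the forbidden angles $0^\circ,60^\circ,120^\circ$ are avoided), and the trivial base case $n=1$, where $B=\emptyset$ and the unique new edge $(p_1,p')$ is destroyed by $z$.
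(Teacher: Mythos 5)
Your proof is correct and follows essentially the same strategy as the paper's: take an extremal $n$-point set, adjoin one new point placed far away so that it sees the old set through a single cone, and observe that one additional blocking point sitting between the new point and the old set destroys every new edge, giving $\beta(n+1)\le\beta(n)+1$. The paper merely chooses a slicker placement---the new point lies beyond a corner $b$ of a large enclosing downward triangle and $b$ itself serves as the blocker---so that the fact that every new up- or down-triangle contains the blocker is a one-line observation rather than the case analysis you carry out.
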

\begin{proof}
Consider a set $P$ with $n$ points such that $\beta(n)=\beta(P)$. 
Let $T^\down$ be a downward equilateral triangle that strictly encloses all points of $P$.
Let $b$ be the rightmost point of~$T^\down$.  Then $P$ lies in cone $C_4$ of $b$.
Let $a_1$ be a point strictly inside cone $C_1$ of $b$;
see also Figure~\ref{augment-fig}.
Every upward or downward equilateral triangle between $a_1$ and any point of $P$ contains 
the point $b$.
Set $P'=P\cup \{a_1\}$, and observe that we can block $G^\both(P')$ by using a minimum blocking set~$B$ of $G^\both(P)$ and adding $b$ to it.  
Since $|B|=\beta(P)=\beta(n)$, we have $\beta(P')\leq \beta(n)+1$, and $\beta(n+1)$ cannot be larger than that.
\end{proof}

Since $\beta(1)=0$, this lemma also shows that $\beta(n)\leq n-1$, or in other words, that the `$n-1$' in Conjecture~\ref{conj:blocking} is tight.
We are now ready to prove Theorem~\ref{thm:reductions}~(a). 

\begin{backInTime}{thmReductions}
  \begin{theorem}[a]
    \label{thm:blocking-to-matching}
    For any set $P$ of $n$ points in the plane, $G^\both(P)$ has a matching of size $\beta(n)/2$.
  \end{theorem}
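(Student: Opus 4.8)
The plan is to use the Tutte–Berge formula (Theorem~\ref{tutte-berge-thr}) to bound the number of unmatched vertices of a maximum matching in $G^\both(P)$, where the set $S$ witnessing the Tutte–Berge bound will be derived from a minimum blocking set. More precisely, suppose $M$ is a maximum matching of $G^\both(P)$ with $u$ unmatched vertices; I want to construct from $M$ (and from the unmatched vertices) a blocking set of small size, so that $\beta(n)$ is at most roughly $n - u$, giving $u \le n - \beta(n)$ and hence $|M| = (n-u)/2 \ge \beta(n)/2$.

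The key step is: given a maximum matching $M$, build a point set $B$ that blocks $G^\both(P)$ and has size at most $n - u$. The natural candidate is to put one blocking point ``near'' each matched edge. For a matched edge $(p,q) \in M$, the edge is introduced by an empty triangle $\up(p,q)$ or $\down(p,q)$; I would place a blocking point just inside this triangle, very close to one endpoint, say $q$ — close enough that it destroys $(p,q)$ but also, being essentially at $q$, it should not matter for other triangles in the way $q$ itself does not. The first thing to check is that $|B| \le |M| = (n-u)/2$, which combined with $\beta(P) \le |B|$ would already give $\beta(n) \le (n-u)/2$, i.e. $u \le n - 2\beta(n)$ — but this is the wrong direction and too strong to be true. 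So the construction must be subtler: the blocking set should consist of the unmatched vertices' neighborhoods or of points associated to matched edges in a way that exploits maximality of $M$.

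The correct idea, I believe, is this. Since $M$ is a \emph{maximum} matching, there is no augmenting path; in particular the unmatched vertices form an independent set, and more can be said using Tutte–Berge. Let $S$ be a Tutte–Berge witness set, so $G^\both(P)\setminus S$ has $\odd(G^\both(P)\setminus S) = u + |S|$ odd components and each odd component is ``factor-critical.'' Now I would build a blocking set $B$ as follows: take $B$ to contain, for each component $K$ of $G^\both(P)\setminus S$, a set of $|K|-1$ points blocking all edges within $K$ (possible by induction, since $\beta$ restricted to a smaller point set is at most $|K|-1$ — wait, that needs $K$'s induced graph to equal $G^\both$ of $K$'s points, which fails). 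The cleaner route: use that every edge of $G^\both(P)$ either lies inside one component of $G^\both(P)\setminus S$, or is incident to $S$. Blocking all edges incident to $S$ costs at most $|S|$ points (one near each vertex of $S$, placed to kill all empty triangles through that vertex — this uses that triangles are ``cone-monotone,'' so a point arbitrarily close to $s\in S$ on the correct side blocks every edge of $s$). Blocking edges inside a component $K$ costs at most $\beta(|K|) \le |K| - 1$ by Lemma~\ref{increasing-lemma} applied inductively — again modulo the subtlety that the induced subgraph on $K$ is a \emph{subgraph} of $G^\both$ on $K$'s points, so blocking $G^\both$ of those points blocks at least as much. Summing, $\beta(n) \le |S| + \sum_K (|K| - 1) = |S| + (n - |S|) - \comp(G^\both(P)\setminus S) = n - \comp(G^\both(P)\setminus S) \le n - \odd(G^\both(P)\setminus S) = n - (u + |S|) \le n - u$. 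Hence $u \le n - \beta(n)$ and $|M| = (n-u)/2 \ge \beta(n)/2$.

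The main obstacle I anticipate is the geometric claim that one can block all empty triangles through a single vertex $s \in S$ by placing one new point; equivalently, that adding a point extremely close to $s$ (on the appropriate side, in the appropriate cone) lies in every empty triangle that has $s$ as one of its two defining points. This should follow because each such triangle has $s$ either at a corner or on an edge, and in either case a sufficiently small perturbation of $s$ into the triangle's interior stays inside — but one must handle the finitely many triangles simultaneously and argue a single perturbation direction works (it does, since there are only finitely many relevant triangles and each constrains the perturbation to an open half-plane-like region near $s$, whose intersection is nonempty — or more carefully, place the point at $s$ and then note it must be perturbed off $P$; the intersection of the relevant open triangle-interiors, each of which has $s$ on its closure, is a nonempty open set accumulating at $s$). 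I would also need to double-check the recursion on $\beta$: strictly, $\beta(|K|)$ bounds the cost of blocking $G^\both$ on the $|K|$ points of component $K$, and since $G^\both(P)[K] \subseteq G^\both(K\text{'s points})$, this suffices; and the base case is a single point, needing $0$ blockers. Assembling these pieces and confirming the points placed for different components and for $S$ do not interfere (they don't, since a point placed to block edges inside $K$ lies in $\mathcal R$ near $K$ and a point near $s\in S$ only needs to be in triangles through $s$) should complete the argument.
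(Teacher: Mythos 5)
Your overall target is the right one---via Tutte--Berge (Theorem~\ref{tutte-berge-thr}) it suffices to show that the deficiency $u$ satisfies $u\le n-\beta(n)$---but the route you take to that inequality cannot work. You try to prove it by exhibiting a blocking set of $G^\both(P)$ itself of size at most $n-u$, i.e., by showing $\beta(P)\le n-u$. That intermediate claim is false: Section~\ref{sec:disjoint-triangles} of the paper constructs point sets with $\beta(P)\ge(5n-6)/4$, which exceeds $n-u$ for every $u\ge0$ as soon as $n>6$, so no construction of the kind you describe can exist. Concretely, two of your building blocks fail. First, one new point near $s\in S$ does not block all edges incident to $s$: an edge from $s$ to a neighbour in cone $C_i^s$ with $s$ at the corner of the introducing triangle is introduced by a triangle contained in the \emph{closed cone} $C_i^s$ (the two sides meeting at that corner lie along the rays bounding the cone), and triangles lying in distinct cones of $s$ are pairwise interior-disjoint. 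A vertex typically has outgoing edges in several (up to all six) of its cones, so up to six blocking points are needed for a single vertex of $S$; your ``half-plane-like regions whose intersection is nonempty'' are in fact disjoint $60^\circ$ wedges. Second, blocking the edges inside a component $K$ costs $\beta(K)$ for that \emph{specific} point set, which can be as large as roughly $5|K|/4$ and is in general only known to be at most $2(|K|-1)$; the bound $\beta(m)\le m-1$ obtained from Lemma~\ref{increasing-lemma} concerns the minimum over all point sets of size $m$ and does not transfer to a given component.

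The paper's proof reverses the role of the blocking set, and this is the idea you are missing. Fix $S$, pick one representative per component of $G^\both(P)\setminus S$, and let $\RC$ be the set of representatives. For any edge of $G^\both(\RC)$ (the $\Theta_6$-graph on the representatives alone), Lemma~\ref{path-in-triangle-lemma} gives a path in $G^\up(P)$ or $G^\down(P)$ between its endpoints that stays inside the introducing triangle; since the endpoints lie in different components of $G^\both(P)\setminus S$, that path must pass through $S$, so the triangle contains a point of $S$. Hence $S$ \emph{is} a blocking set of $G^\both(\RC)$, giving $|S|\ge\beta(\RC)\ge\beta(|\RC|)$, and Lemma~\ref{increasing-lemma} gives $\beta(|\RC|)\ge\beta(n)-(n-|\RC|)$. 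Therefore $\comp(G^\both(P)\setminus S)-|S|=|\RC|-|S|\le|\RC|-\beta(|\RC|)\le n-\beta(n)$ for every $S$, and Tutte--Berge finishes. Note that $\beta$ enters only as a \emph{lower} bound on $|S|$, applied to the reduced point set $\RC$; no blocking set is ever constructed.
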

\end{backInTime}
\begin{proofWithFormula}
	Consider the $\Theta_6$-graph $G^\both(P)$ on a set $P$ of $n$ points in the plane. 
	We again use the Tutte-Berge formula (Theorem~\ref{tutte-berge-thr}) to prove that $G^\both(P)$ contains a matching of size at least $\beta(n)/2$. 
	Fix an arbitrary set $S \subseteq P$ and consider the connected components of $G^\both(P)\setminus S$.  As in the proof of Theorem~\ref{thm:matching-bound},  fix one representative point in each component, and  let
	$\RC$ be the set of these points.  
	
  Consider the $\Theta_6$-graph $G^\both(\RC)$ of only the points in $\RC$, and 
  let $(\rc_1,\rc_2)$ be an edge in it; say it is introduced by $\bigtriangleup(\rc_1,\rc_2)$.  	
	By Lemma~\ref{path-in-triangle-lemma}, there is a path $\pi$ between $\rc_1$ 
  and $\rc_2$ in~$G^\up(P)$ that is fully contained in $\bigtriangleup(\rc_1,\rc_2)$; 
  moreover, all triangles introducing the edges of $\pi$ lie in $\bigtriangleup(c_1,c_2)$. 
	Since $\rc_1$ and $\rc_2$ are in different components of $G^\both(P)\setminus S$,
  at least one point of $\pi$ belongs to~$S$.   
	
	Thus, for any edge in $G^\both(\RC)$, the triangle that supports that edge 
  contains a point in~$S$.  Put differently,$S$ blocks $G^\both(\RC)$, and 
  thus $|S|\geq \beta(|\RC|)$. Furthermore, $\beta(n)\leq \beta(|\RC|)+n-|\RC|$ 
  by Lemma~\ref{increasing-lemma} since $|\RC|\le n$. Combining this with Theorem~\ref{tutte-berge-thr}, 
	it follows that the size of maximum matching in $G^\both(P)$ is at least
	\[\frac{n-(|\RC|-|S|)}{2}\geq\frac{n-(|\RC|-\beta(|\RC|))}{2}\geq\frac{n-(n-\beta(n))}{2}=\frac{\beta(n)}{2}. \tag*{\qed}\]  
\end{proofWithFormula}

In particular, if $\beta(n)\geq n-1$, then $\mu(n)\geq \beta(n)/2 \geq (n-1)/2$, 
so by integrality $\mu(n) \geq \lceil (n-1)/2 \rceil$.  In other words, 
Conjecture~\ref{conj:blocking} implies Conjecture~\ref{conj:perfect-matching}.

We now turn to the other half of Theorem~\ref{thm:reductions}.
Note that Aichholzer \etal\cite{Aichholzer2013} proved a similar result 
(for $c=d=1/2$ and Delaunay graphs), and our proof is a modification of theirs.  
(In fact, the proof applies to any proximity graphs.)

\begin{backInTime}{thmReductions}
  \begin{theorem}[b]
    \label{thm:matching-to-blocking}
    Assume that we know that $\mu(n) \geq cn+d$ for some constants $c,d$.
    Then $\beta(n)\geq (cn+d)/(1-c)$.
  \end{theorem}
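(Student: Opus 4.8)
The plan is to run the reduction of part~(a) in reverse: take a point set that witnesses $\beta(n)$, add its blocking set to it, and observe that the enlarged graph cannot contain a large matching. A blocking set thus places a \emph{ceiling} on $\mu$ for a slightly larger instance, and the assumed linear \emph{lower} bound on $\mu$ then forces $\beta(n)$ to be large. This is (up to the substitution of triangles for disks) the argument of Aichholzer \etal\cite{Aichholzer2013}.

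Concretely, I would fix a set $P$ of $n$ points with $\beta(P)=\beta(n)$ together with a minimum blocking set $B$ of $G^\both(P)$, so $|B|=\beta(n)$. (One minor technical point: $P\cup B$ need not be in general position. Since each point of $B$ lies in the open interior of the triangle it blocks, and there are only finitely many forbidden directions, one can perturb $B$ slightly to restore general position without destroying the blocking property; I would dispatch this in one sentence.) The key observation is structural: by definition of a blocking set, $G^\both(P\cup B)$ has no edge with both endpoints in $P$, so every edge of $G^\both(P\cup B)$ is incident to a vertex of $B$. Hence any matching of $G^\both(P\cup B)$ must use a distinct vertex of $B$ for each of its edges, and therefore has at most $|B|=\beta(n)$ edges. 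Writing $m=|P\cup B|=n+\beta(n)$ and using that $\mu(m)$ is the \emph{minimum} of the maximum-matching size over all $m$-point sets, we get $\mu(m)\le \mu(P\cup B)\le \beta(n)$.

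Finally I would combine this with the hypothesis applied at $m$, namely $\mu(m)\ge cm+d$, to get $c\bigl(n+\beta(n)\bigr)+d\le\beta(n)$; since $c<1$ (indeed $c\le 1/2$, as a matching on $m$ vertices has at most $m/2$ edges), this rearranges to $\beta(n)\ge (cn+d)/(1-c)$, proving the theorem. Substituting $c=3/7$ and $d=-8/7$ from Theorem~\ref{thm:matching-bound} then yields $\beta(n)\ge 3n/4-2$. There is no substantive obstacle in this argument; the only things to be careful about are the general-position perturbation and remembering that the assumed bound $\mu(n)\ge cn+d$ must be invoked at the value $n+\beta(n)$ rather than at $n$.
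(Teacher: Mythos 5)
Your proposal is correct and is essentially the paper's own argument: both fix a witness $P$ with a minimum blocking set $B$, observe that $P$ is independent in $G^\both(P\cup B)$ so any matching there has at most $|B|=\beta(n)$ edges, and then apply the hypothesis at $n+\beta(n)$ points and rearrange (the paper phrases the counting as $n\le |M|+(n+b-2|M|)$, which is the same bound $|M|\le b$). Your added remark about perturbing $B$ to restore general position is a minor technical nicety the paper omits.
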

\end{backInTime}

\begin{proof}
Let $P$ be a set of $n$ points such that $\beta(P)=\beta(n)=b$, and let $B$ be 
a minimum blocking set of $G^\both(P)$.
Then $P$ is an independent set in $G^\both(P \cup B)$.
Let $M$ be a matching of size at least $\mu(b+n)\geq cb+cn+d$ in $G^\both(P\cup B)$.  
Since $P$ is an independent set in $G^\both(P \cup B)$, it contains at most one 
endpoint of each edge in $M$, as well as some unmatched points, so 
\[ n = |P| \leq |M| + (n+b-2|M|) \leq  n + b - (cb+cn+d)   \] 
Solving for $b$ gives $\beta(n) = b \geq (cn+d)/(1-c)$.
\end{proof}

In particular, if Conjecture~\ref{conj:perfect-matching} holds, then $\mu(n)\geq (n-1)/2$.
Hence, $c=1/2$ and $d=-1/2$, therefore $\beta(n) \geq 2(n-1)/2 = n-1$ and Conjecture~\ref{conj:blocking} holds. 
So Conjecture~\ref{conj:perfect-matching} implies Conjecture~\ref{conj:blocking}. 
As a second consequence, we know that $(3n-8)/7$ is a valid lower bound on $\mu(n)$ 
by Theorem~\ref{thm:matching-bound}, therefore (with $c=3/7$) we have 
$\beta(n)\geq 7/4\cdot(3n-8)/7 = 3n/4-2$.

\section{\texorpdfstring{Other Bounds on $\alpha$, $\mu^*$, and $\beta$.}{Other Bounds}}
\label{sec:disjoint-triangles}

In this section, we give upper bounds on $\alpha(n)$ and $\mu^*(n)$.  
Specifically, we give an example of $n$ points for which the maximum number of 
pairwise internally disjoint empty triangles is $3n/4$; this shows that $\alpha(n) \le 3n/4$.
Then we give an example on $n$ points for which the maximum strong matching 
has $2n/5$ edges; this shows that $\mu^*(n) \le 2n/5$.

We defined $\beta(n)$ to be the minimum size of a blocking set of any 
$\Theta_6$-graph on $n$ points because this was relevant for matchings, but it 
is also interesting to know the maximum number of points that may be needed to 
block any $\Theta_6$-graph on $n$ points, i.e., to establish bounds 
on $\hat\beta(n)$---the maximum, over all points sets~$P$ of size $n$, 
of $\beta(P)$. An easy upper bound on $\hat\beta(n)$ follows from   
Biniaz \etal\cite{Biniaz2015} who showed that $G^\down$ can always be blocked 
by $n-1$ points placed just above every input point except for the topmost one. 
By symmetry, $G^\up$ can always be blocked by $n-1$ points, and thus, 
$G^\both$ can be blocked by at most $2(n-1)$ points, i.e., $\hat\beta(n) \le 2(n-1)$. 
Our final example of this section is a set of points $P$ such 
that $\alpha(P) \ge (5n-6)/4$, and thus $\beta(P) \ge (5n-6)/4$; this shows 
that $\hat\beta(n) \ge (5n-6)/4$.

\para{An upper bound on $\alpha(n)$} 

Figure~\ref{disjoint-4cluster} shows how to construct a point set of size $n$ such that $\alpha(P) = 3n/4$.
The point set consists of repeated copies of a cluster $R$ of four points arranged as shown in 
Figure~\ref{disjoint-4cluster-a}. 
Observe that there are 8 empty triangles formed by pairs of points in~$R$: 2~for 
each of the three dashed edges, and 1 for each of the two long blue edges---we 
call these the ``blue triangles''.  

\begin{lemma}
  In $R$, there are at most 3 interior-disjoint empty triangles.
  \label{lem:3-disjoint}
\end{lemma}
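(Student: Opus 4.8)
The plan is to recast the lemma as a combinatorial statement about an eight-vertex \emph{conflict graph} $H$. The vertices of $H$ are the eight empty triangles of~$R$: the triangles $\bigtriangleup(e)$ and $\bigtriangledown(e)$ for each of the three dashed edges~$e$, together with the one blue triangle for each of the two blue edges. Two vertices are adjacent in~$H$ when the corresponding triangles have overlapping interiors. A set of pairwise interior-disjoint empty triangles is then exactly an independent set of~$H$, and Figure~\ref{disjoint-4cluster-a} already exhibits three such triangles, so what remains is to show that $H$ has no independent set of size~$4$.

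To keep the case analysis manageable, I would first record three structural facts that can be read off the arrangement in Figure~\ref{disjoint-4cluster-a}. (i)~For every dashed edge $e=(p,q)$, in each of $\bigtriangleup(p,q)$ and $\bigtriangledown(p,q)$ one of $p,q$ is a corner and the other lies strictly in the interior of the opposite side (using general position to exclude the forbidden angles), so both triangles contain the open segment $pq$ in their interior, hence overlap in a two-dimensional region around~$pq$; thus $\bigtriangleup(e)$ and $\bigtriangledown(e)$ are adjacent in~$H$, and an independent set uses at most one triangle per dashed edge. (ii)~The two long blue edges cross, so their common point lies in the interior of both blue triangles (again each blue triangle contains its defining segment in its interior); hence the two blue triangles are adjacent in~$H$, and an independent set uses at most one of them. (iii)~Since $R$ induces $K_4$ minus one edge, the two endpoints of that missing edge form a vertex cover of the five edges of~$R$; consequently every empty triangle of~$R$ has one of these two ``cover points'', call them $b$ and $c$, among its two defining boundary points.

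Facts (i) and (ii) already bound an independent set of~$H$ by $3+1=4$, and such a set can only consist of one empty triangle chosen from each dashed edge plus one blue triangle. The crux is to rule this out: over the $2^3$ choices of an up- or down-triangle on each dashed edge, crossed with the two choices of blue triangle, I would show that two of the four selected triangles always have overlapping interiors. This finite check is where I expect the real work to lie, since it needs the explicit (or sufficiently careful qualitative) placement of the four points of~$R$. To organize it I would use angular bookkeeping at the cover points from~(iii): any triangle of the family that meets $b$ at a corner occupies a $60^\circ$ sector there and one that meets $b$ along a side occupies a $180^\circ$ sector, and the triangles of a disjoint family occupy disjoint sectors at $b$ (and likewise at $c$); combined with the restriction that these triangles come only from the three edges at~$b$ (resp.\ at~$c$), this forces an overlap in each case. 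Carrying out this verification shows $H$ has independence number~$3$, which is exactly the claim of Lemma~\ref{lem:3-disjoint}.
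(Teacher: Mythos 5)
Your reduction to the independence number of a conflict graph $H$ is a reasonable framing, and your fact~(i) (at most one triangle per dashed edge, since $\bigtriangleup(e)$ and $\bigtriangledown(e)$ both contain the open segment $e$ in their interiors) matches the first case of the paper's argument. But there is a genuine gap: the decisive step --- showing that one cannot pick one triangle from each dashed edge \emph{plus} one blue triangle --- is never actually carried out. You reduce it to a finite check over $2^3\cdot 2$ selections and sketch an ``angular bookkeeping'' strategy at two cover points, but then simply assert that ``carrying out this verification'' forces an overlap in every case. That verification is the entire content of the lemma beyond the easy bound of $4$; as written you have established $4$, not $3$. The paper closes this case with one concrete observation about Figure~\ref{disjoint-4cluster-a}: each blue triangle individually intersects \emph{both} empty triangles of one particular dashed edge (the black dotted one), so a selection using one blue triangle can draw on only the two remaining dashed edges, giving $1+1+1=3$. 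No fact of this kind is identified in your write-up, and without it the case analysis does not terminate.

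Two of your stated structural facts are also questionable. Fact~(ii), that the two blue triangles overlap, is at odds with the paper's own case analysis, which treats ``using both blue triangles'' as a realizable configuration of two interior-disjoint triangles (which then rules out all six remaining ones); you are assuming, without verification, that the two blue segments cross at a point interior to both triangles. If (ii) fails, you must separately handle selections containing both blue triangles (the paper does, and gets only $2$ there). Fact~(iii) is stated backwards: in $K_4$ minus an edge, a vertex cover of the five remaining edges is the pair of degree-$3$ vertices, i.e.\ the two vertices \emph{not} incident to the missing edge, not its endpoints. Neither slip is fatal to the overall plan, but together with the unexecuted main check they mean the proposal does not yet prove the lemma.
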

\begin{proof}
  If neither blue triangle is used, then there are at most 3 interior-disjoint 
  triangles, one for each dashed edge.  
  Using both blue triangles rules out all other empty triangles. Using exactly 
  one blue triangle rules out both empty triangles corrsponding to the black dotted edge.
\end{proof}

The final configuration consists of~$t$ copies $R_1, \ldots, R_t$ of $R$, called 
\emph{clusters}, where $R_{i+1}$ lies in cone~$C_1$ of all the points of $R_i$. 
If we do not use empty triangles determined by pairs of points from different clusters,
then by Lemma~\ref{lem:3-disjoint} we can get at most 3 empty triangles for each 4 
points in $R_i$ for a total of $3n/4$ interior disjoint empty triangles.  
It remains to analyze what happens when we use empty triangles between different clusters. 

Consider an empty triangle $T$ determined by two points $p$ and $q$ in different clusters. 
Then the points lie in consecutive clusters, say $R_{i-1}$ and $R_{i}$.  Furthermore, 
one of the points, say $p$, lies at a corner of $T$.  We assign the triangle $T$ to 
the cluster of the other point $q$.  Observe (see Figure~\ref{disjoint-4cluster-b})
that $q$ must be the unique extreme point of its cluster, but point $p$ is not unique.
The proof that the point set allows at most $3n/4$ interior-disjoint empty triangles follows from the following lemma.

\begin{figure}[t]
	\centering
	\subcaptionbox{\label{disjoint-4cluster-a}}{\includegraphics[scale=.2]{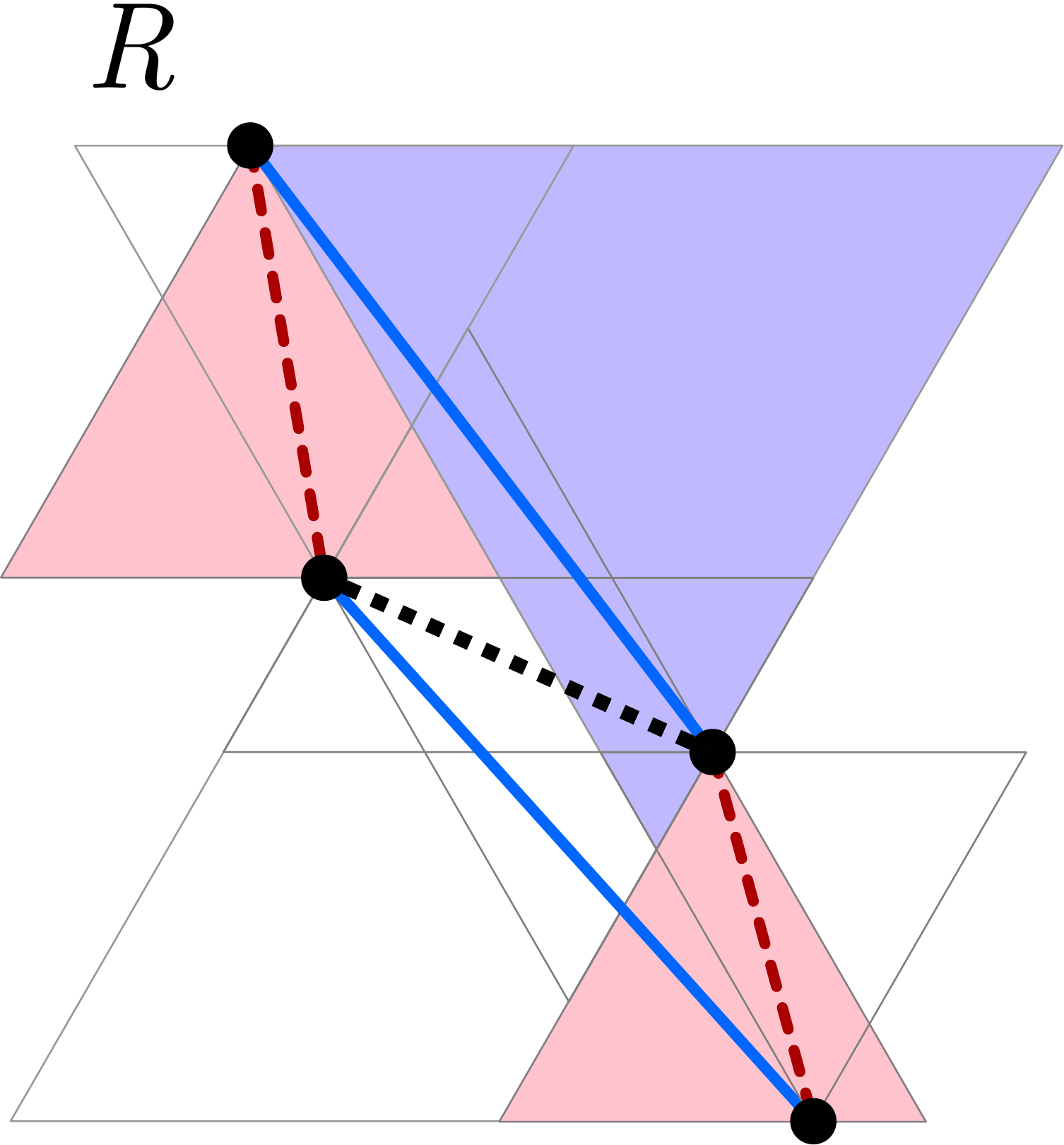}}
  \hfil
	\subcaptionbox{\label{disjoint-4cluster-b}}{\includegraphics[scale=.15]{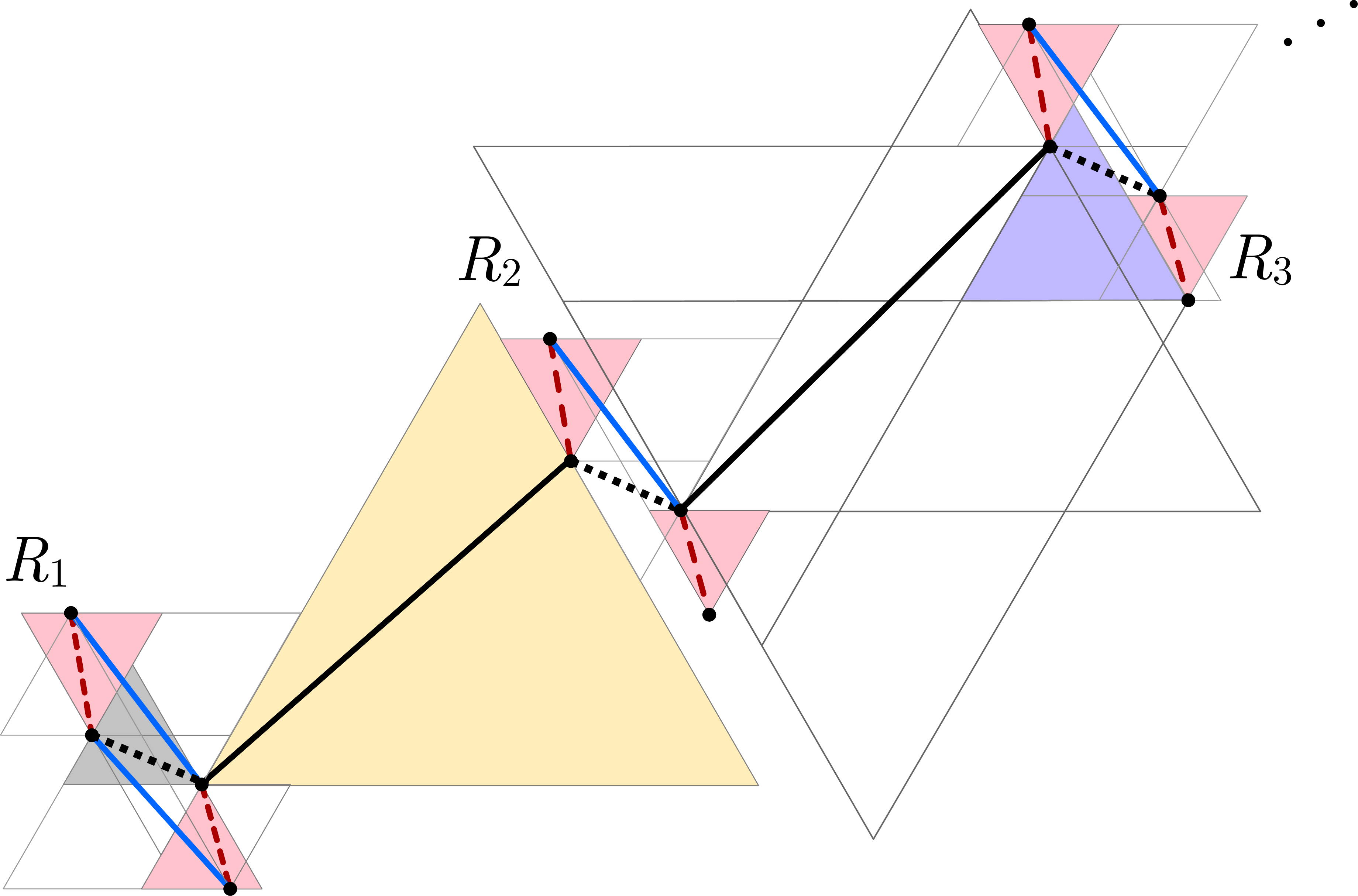}}
	\caption{A point set for which the maximum number of disjoint triangles is $3n/4$. 
    (a) The basic cluster $R$, its $\Theta_6$-graph, and a set of 3 possible 
    internally-disjoint empty triangles. 
    (b)~The final point set formed by repeating $R$.  Only some of the empty 
    triangles between clusters are shown.}
	\label{disjoint-4cluster}
\end{figure}

\begin{lemma}
    For any set of interior-disjoint empty triangles and any $i$, there are at 
    most 3 triangles assigned to or contained in $R_i$.
\end{lemma}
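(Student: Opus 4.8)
The plan is to classify the triangles counted by the lemma according to their two endpoints and then to bound the resulting groups jointly. In an interior-disjoint family of empty triangles, every triangle either has both endpoints in a single cluster, and is then \emph{contained} in that cluster, or it has its two endpoints in two consecutive clusters, and is then \emph{assigned} to the cluster of its non-corner endpoint; it cannot have endpoints in clusters that are two or more apart, because the cone layout of the $R_j$'s would force the defining triangle to swallow an intermediate cluster. So it suffices to show that, for each $i$, the number of triangles contained in $R_i$ plus the number assigned to $R_i$ is at most $3$. By Lemma~\ref{lem:3-disjoint} the contained triangles already number at most $3$, so if no triangle is assigned to $R_i$ we are done; the work lies in the case that $R_i$ receives assigned triangles.

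First I would fix the geometry of an assigned triangle $T$, with corner $p$ in an adjacent cluster and non-corner endpoint $q\in R_i$. Because the adjacent cluster sits inside a single $60^\circ$ cone of every point of $R_i$ (and symmetrically), $T$ must be the smallest $\up$- or $\down$-triangle with a corner at $p$ that reaches $R_i$; our general-position assumption then forces $q$ into the relative interior of the edge of $T$ \emph{opposite} $p$, forces $q$ to be the extreme point of $R_i$ in the direction of the adjacent cluster (this is the ``unique extreme point'' of Figure~\ref{disjoint-4cluster-b}), and puts the other three points of $R_i$ strictly on the far side of the line $\ell$ supporting that opposite edge. I draw two conclusions. (i) At most one triangle can be assigned to $R_i$ through a given extreme point $q$, since two such triangles would both contain a small disk at $q$ on the same side of $\ell$ and hence not be interior-disjoint; so $R_i$ receives at most two assigned triangles in all (at most one if $i\in\{1,t\}$). (ii) Although $\ell$ separates the interior of $T$ from the convex hull of $R_i$, the empty triangles of the four-point cluster $R$ incident to $q$ need not lie inside that convex hull, and this is the mechanism by which an assigned triangle rules out contained ones.

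The heart of the argument is a refinement of the case analysis behind Lemma~\ref{lem:3-disjoint}. For each extreme point $q$ of $R_i$ carrying an assigned triangle $T$, I would run through the eight empty triangles of $R$ (two for each of the three dashed edges, plus the two ``blue'' triangles) and identify those whose interior meets the interior of $T$; geometrically these are exactly the empty triangles of $R$ that poke out of the convex hull of $R_i$ across $\ell$ toward the adjacent cluster, which I expect to be the blue triangle at $q$ together with at least one dashed-edge triangle at $q$. Deleting these forbidden triangles from the list in Lemma~\ref{lem:3-disjoint} and rerunning its count shows that if $k\in\{1,2\}$ triangles are assigned to $R_i$, then at most $3-k$ triangles can be contained in $R_i$, so the total is at most $3$ in every case, which proves the lemma.

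The step I expect to be the real obstacle is this last check: for the specific coordinates of the four points of $R$ and the specific placement of $R_{i+1}$ within cone $C_1$ of $R_i$, one must verify precisely which empty triangles of $R$ protrude from the convex hull of $R_i$ in the directions of $R_{i-1}$ and of $R_{i+1}$, and hence are killed by an assigned triangle there. The reduction to the two counts, the ``at most two assigned'' bound, and the final arithmetic are routine once the cluster and the cone layout are pinned down.
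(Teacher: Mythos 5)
Your proposal follows essentially the same route as the paper: split the count into triangles contained in $R_i$ versus triangles assigned to it, observe that at most two can be assigned (one through each extreme point, toward $R_{i-1}$ and $R_{i+1}$), and then case on that number, showing each assigned triangle destroys enough of the eight in-cluster empty triangles. The one step you explicitly defer --- determining which empty triangles of $R$ an assigned triangle intersects --- is exactly where the paper's proof also does its only real work, resolving it by inspection of the figure: a single assigned triangle intersects $4$ of the $8$ empty triangles of $R_i$ and leaves at most $2$ pairwise internally-disjoint ones, while two assigned triangles rule out all empty triangles inside $R_i$ (even stronger than the $3-k$ bound you target). So your outline is correct and matches the paper; completing it requires only that figure-dependent check.
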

\begin{proof}
    Consider $R_i$ and suppose that our set contains one between-cluster empty 
    triangle assigned to $R_i$.  By symmetry, we may suppose that this triangle 
    has a corner at a point  in~$R_{i-1}$; see, for example, the large yellow 
    triangle in Figure~\ref{disjoint-4cluster-b}. This triangle intersects~4 of 
    the empty triangles of $R_i$, and it is easy to check that there are at
    most 2 internally-disjoint triangles left. 
    
    Next, suppose that we use more than one between-cluster empty triangle 
    assigned to $R_i$.  Then there must be exactly two such triangles, one with 
    a corner in $R_{i-1}$ and one with a corner in $R_{i+1}$.  But then all the 
    empty triangles inside~$R_i$ are ruled out.
\end{proof}

\para{An upper bound on $\mu^*(n)$.}

Figure~\ref{strong-matching-fig} shows how to construct a point set~$P$ of 
size $n$ such that $\mu^*(P) = 2n/5$. The point set consists of repeated copies 
of a cluster $S$ of five points arranged as shown in Figure~\ref{strong-matching-fig-a}. 
It is crucial that the two triangles shown in the figure intersect.
The final configuration consists of $t$ copies $S_1,\ldots,S_t$ of $S$, again 
called clusters, where $S_{i+1}$ lies in cone $C_1$ of all the points of $S_i$. 
See Figure~\ref{strong-matching-fig-c}.

\begin{figure}[t]
	\centering
	\subcaptionbox{\label{strong-matching-fig-a}}{\includegraphics[scale=.25]{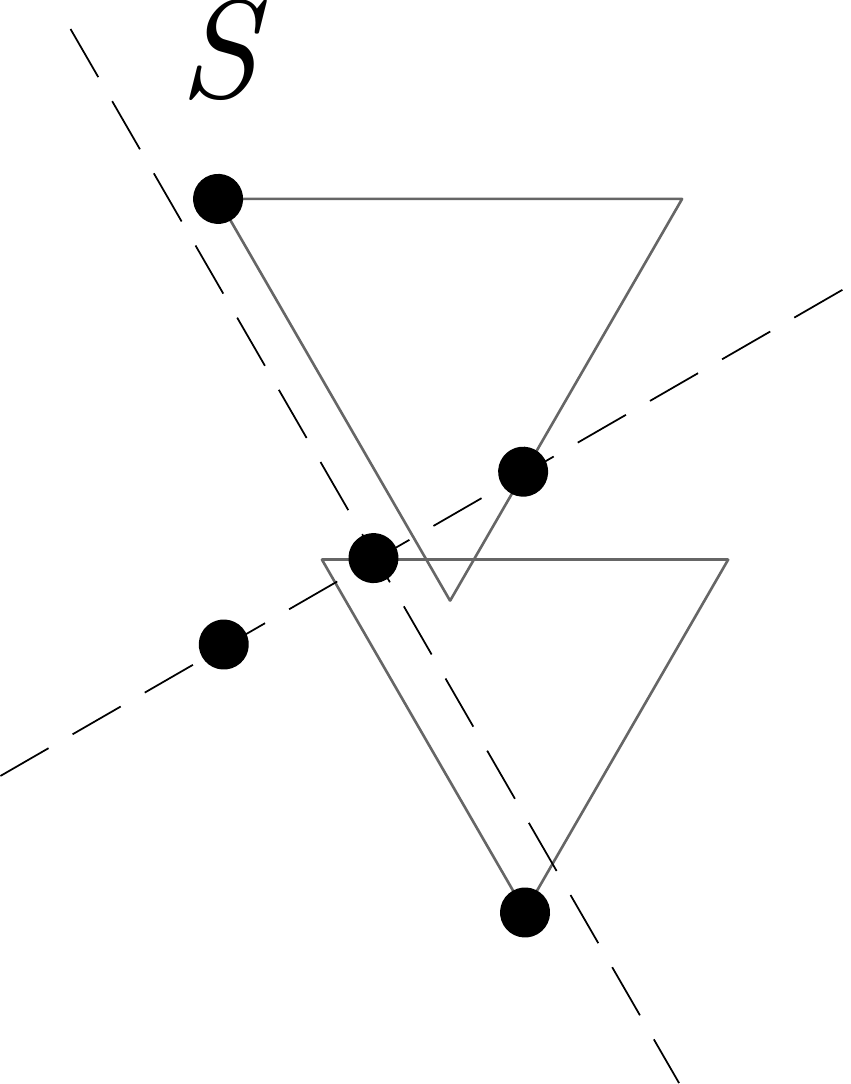}}
  \hfill
  \subcaptionbox{\label{strong-matching-fig-b}}{\includegraphics[scale=.25]{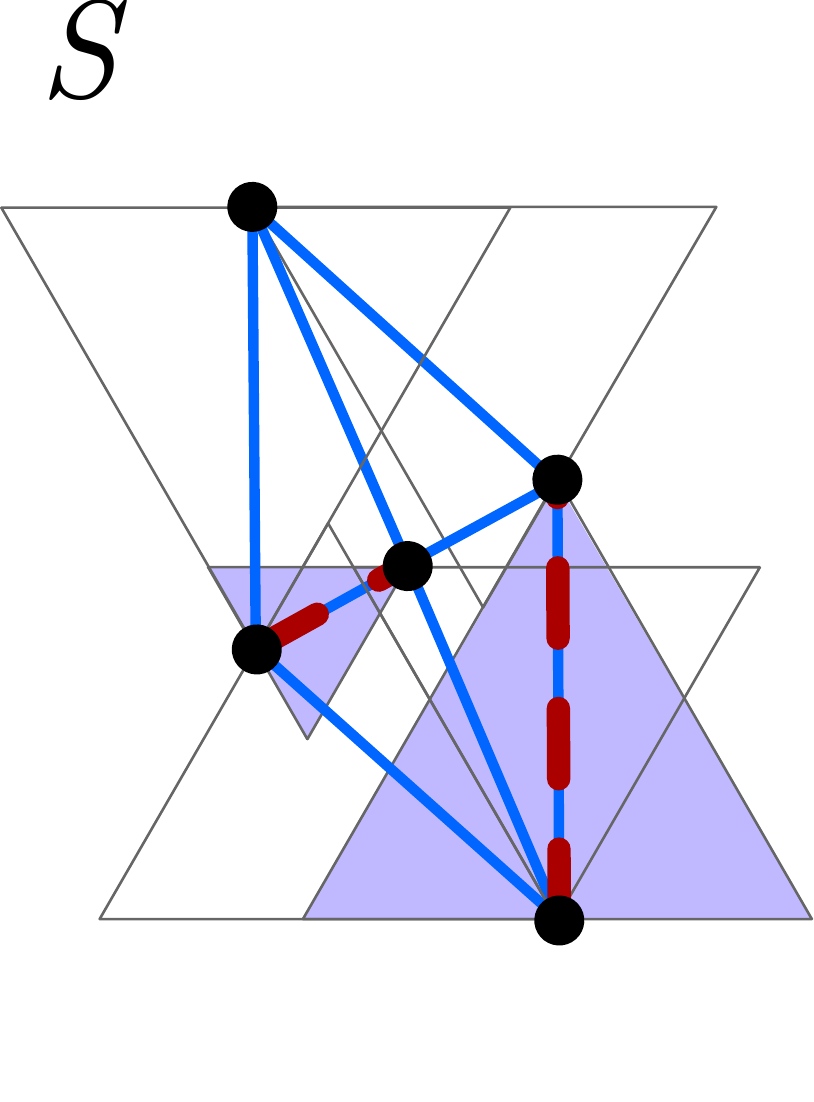}}
  \hfill
  \subcaptionbox{\label{strong-matching-fig-c}}{\includegraphics[scale=.2]{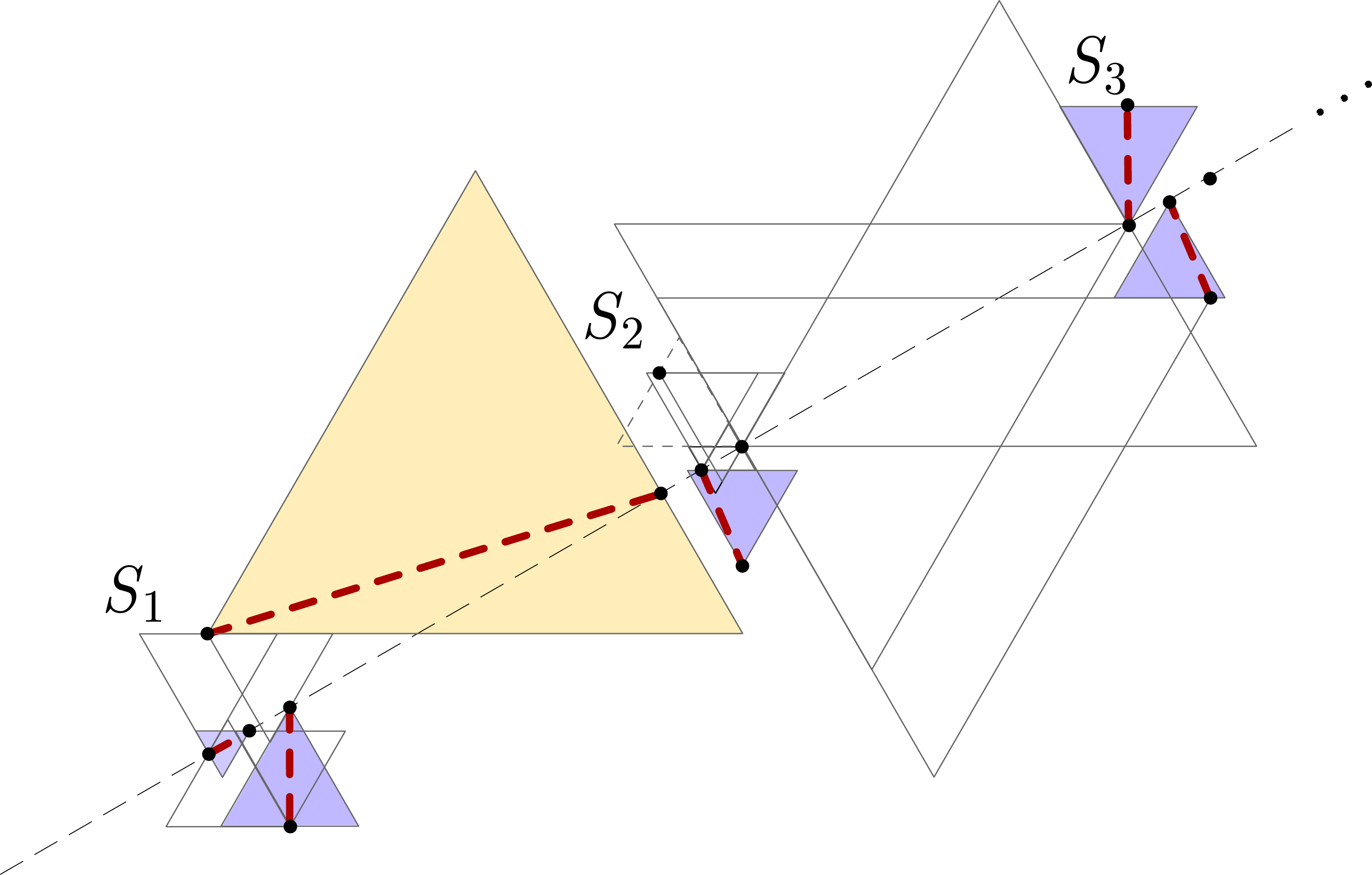}}
	\caption{A point set for which the maximum strong matching has at most $2n/5$ edges. 
    (a)~The basic cluster $S$ of 5 points. The dashed lines are guide lines 
    at $30^\circ$ and $120^\circ$.  
    (b) The $\Theta_6$-graph of~$S$, some of the empty triangles, and a strong matching (dashed red). 
    (c) The final point set formed by repeating $S$.  Only some of the empty 
    triangles between clusters are shown.}
	\label{strong-matching-fig}
\end{figure}

If we do not use empty triangles between clusters, then each cluster has at most 
two disjoint empty triangles, i.e., at most two strong matching edges, so the 
matching has at most $2n/5$ edges. As in the previous construction, an empty triangle $T$
determined by two points $p$ and $q$ in different clusters must go between 
consecutive clusters, and one point, say~$p$, must lie at a corner of $T$. 
As before, we assign such a triangle to the cluster containing the other point $q$.  
The proof that the point set allows at most $2n/5$ strong matching edges follows from the following lemma.

\begin{lemma}
    For any set of disjoint empty triangles and any $i$, there are at most 2 triangles assigned to or contained in $S_i$.  
\end{lemma}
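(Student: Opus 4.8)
The plan is to mirror the structure of the analogous lemma for the $3n/4$ construction, doing a case analysis on the number of between-cluster empty triangles assigned to $S_i$ and, in each case, bounding how many within-cluster empty triangles can still be used disjointly from them.

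First I would record the two basic facts about a between-cluster empty triangle $T$ assigned to $S_i$: one defining point, $p$, lies at a corner of $T$ inside a neighboring cluster ($S_{i-1}$ or $S_{i+1}$), and the other, $q$, lies in $S_i$. As in the $3n/4$ analysis, the placement of the clusters (each $S_{i+1}$ sitting in cone $C_1$ of every point of $S_i$) forces $q$ to be a specific extreme point of $S_i$ — the point of $S_i$ nearest to $S_{i-1}$, respectively $S_{i+1}$, in the relevant cone. Hence any two between-cluster triangles assigned to $S_i$ with corners in the \emph{same} neighbor share the vertex $q$ and so intersect; therefore a set of disjoint triangles contains at most one between-cluster triangle per neighboring direction, i.e.\ at most two such triangles assigned to $S_i$, and if there are two, one has its corner in $S_{i-1}$ and the other in $S_{i+1}$.

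Then I would split into three cases. (0) If no between-cluster triangle is assigned to $S_i$, then every triangle counted has both endpoints among the five points of $S_i$; inspecting the $\Theta_6$-graph of the cluster (Figure~\ref{strong-matching-fig-b}) and, crucially, using that the two ``long'' within-cluster triangles intersect, one checks that at most two within-cluster empty triangles of $S_i$ are pairwise disjoint. (1) If exactly one between-cluster triangle $T$ is assigned to $S_i$, say with corner in $S_{i-1}$, I would verify geometrically that $T$ meets all but at most one of the within-cluster empty triangles of $S_i$; together with $T$ this gives at most two. (2) If two between-cluster triangles are assigned to $S_i$ — one with corner in $S_{i-1}$, one with corner in $S_{i+1}$ — I would check that their union meets every within-cluster empty triangle of $S_i$, so the only triangles counted are these two, again at most two.

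The main obstacle is the geometric bookkeeping in cases (0) and (1): enumerating the empty triangles of the fixed five-point cluster, confirming the crucial intersection of the two long triangles, and checking that a between-cluster triangle assigned to $S_i$ really stabs all but (at most) one of them. This is a finite verification on a small configuration, best carried out with reference to Figure~\ref{strong-matching-fig}; I would present it by listing the within-cluster empty triangles explicitly and exhibiting, for each, a point of $T$ (or of the two between-cluster triangles in case~(2)) that lies in it, and separately noting the one possible survivor in case~(1).
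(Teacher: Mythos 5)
Your overall plan matches the paper's proof: the same assignment of between-cluster triangles, the same observation that at most one such triangle can come from each neighboring direction, and the same three-way case split with a finite geometric check on the five-point cluster. Case (0) (handled in the paper just before the lemma, via the crucial intersection of the two triangles in Figure~\ref{strong-matching-fig-a}) and case (2) are argued exactly as you propose.

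There is, however, one concrete discrepancy in your case (1). You claim that the between-cluster triangle $T$ meets ``all but at most one'' of the within-cluster empty triangles of $S_i$, and you plan to verify this by exhibiting, for each stabbed triangle, a point of $T$ inside it. That verification would fail: in this configuration $T$ only eliminates the triangles incident to the extreme point $q$ of $S_i$, leaving $4$ points of $S_i$ and $5$ within-cluster empty triangles that are entirely disjoint from $T$. The bound still holds, but for a different reason: no two of those five surviving triangles are mutually disjoint. The paper's argument is that two disjoint survivors would have to form a perfect matching on the remaining $4$ points, hence would have to use the edge from the bottommost point to the central point, whose empty triangle intersects all four other surviving triangles. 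So your case (1) needs to be repaired by replacing the ``$T$ stabs all but one'' claim with a pairwise-intersection check among the survivors; the rest of your argument goes through unchanged.
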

\begin{proof}
  Consider $S_i$ and suppose that our set contains one between-cluster empty
  triangle assigned to $S_i$.  By symmetry, we may suppose that this triangle has 
  a corner at a point  in $S_{i-1}$; see, for example, the large yellow triangle 
  in Figure~\ref{strong-matching-fig-c}. There are only 4 points and~5 empty 
  triangles in $S_i$ that are disjoint from the big triangle (these are shown with 
  solid thin lines in the central cluster in Figure~\ref{strong-matching-fig-c}), 
  and we claim that no two of those are disjoint. 
  In more detail, and referring to the figure, a strong perfect matching would 
  have to match the bottommost point of the cluster with the central point, but 
  the corresponding triangle intersects all the other 4 empty triangles. 

  Next, suppose that the set contains more than one between-cluster empty 
  triangle assigned to $S_i$.  Then there must be exactly two such triangles, 
  one with a corner in $S_{i-1}$, and one with a corner in $S_{i+1}$.  
  But then all the empty triangles inside $S_i$ are ruled out.
\end{proof}

\para{A lower bound on $\hat\beta(n)$.}

Figure~\ref{blocking-fig} shows how to construct a set of $n$ points with at 
least $(5n-6)/4$ pairwise internally-disjoint empty triangles.  Start with the 
triangle $t$ which has two points on its boundary, then attach to it copies of 
the gadget $\gamma$ stacked one on top of the other; this gadget adds four 
points and five interior-disjoint triangles.

\begin{figure}[t] 
	\centering
	\includegraphics[scale=.75]{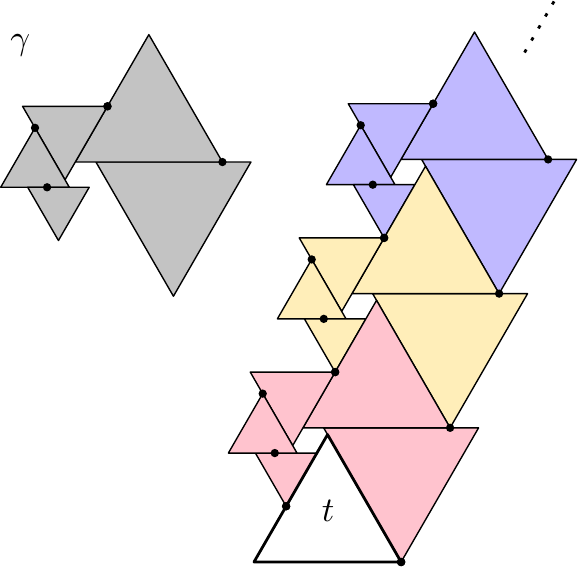} 
	\caption{A set of $n$ points with at least $(5n-6)/4$ pairwise internally disjoint empty triangles.}
	\label{blocking-fig}
\end{figure}

\begin{theorem}
  There are infinitely many~$n$ with $\alpha(n)\le 3n/4$,
  $\mu^*(n)\le 2n/5$, and $\hat\beta(n)\ge(5n-6)/4$.  
\end{theorem}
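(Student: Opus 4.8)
The plan is to assemble this final theorem directly from the three constructions already described in the section, since each bullet corresponds to one of them. The three claims are independent, so I would prove them in turn and simply note that each construction works for infinitely many $n$ (namely $n$ a multiple of the relevant cluster size, or $n \equiv 2 \pmod 4$ for the last one), so a single value of $n$ cannot serve all three simultaneously, but there are infinitely many $n$ for which \emph{all} of the bounds hold — for instance, take $n$ to be a common multiple of $4$ and $5$; the point sets are different, but each bound is about the minimum/maximum over \emph{all} $n$-point sets, so exhibiting one witness set of size $n$ for each suffices.

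For the first claim, $\alpha(n)\le 3n/4$: I would take the point set of Figure~\ref{disjoint-4cluster}, consisting of $t$ clusters $R_1,\dots,R_t$ of $4$ points each, so $n=4t$. Lemma~\ref{lem:3-disjoint} handles the within-cluster triangles (at most $3$ per cluster), and the assignment argument — assign each between-cluster triangle to the cluster containing its non-corner endpoint $q$ — together with the lemma that at most $3$ triangles are assigned to or contained in any $R_i$ gives a total of at most $3t = 3n/4$ interior-disjoint empty triangles. For the second claim, $\mu^*(n)\le 2n/5$: the argument is the exact analogue with the $5$-point cluster $S$ of Figure~\ref{strong-matching-fig}, $n=5t$, using the lemma that at most $2$ (fully) disjoint empty triangles are assigned to or contained in each $S_i$, for a total of at most $2t = 2n/5$ strong-matching edges. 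For the third claim, $\hat\beta(n)\ge (5n-6)/4$: I would use the stacked-gadget construction of Figure~\ref{blocking-fig}, where one starts with the base triangle $t$ (two points) and attaches $k$ copies of the gadget $\gamma$, each adding $4$ points and $5$ pairwise internally-disjoint empty triangles that are also disjoint from everything below; so $n = 2 + 4k$ and the number of pairwise internally-disjoint empty triangles is at least $5k = 5(n-2)/4 = (5n-10)/4$. I'd double-check the exact count against the figure — the stated bound is $(5n-6)/4$, so presumably the base triangle itself contributes one more triangle, giving $5k+1 = (5n-6)/4$. Then $\alpha(P) \ge (5n-6)/4$, and since $\beta(P)\ge\alpha(P)$ always, this point set $P$ witnesses $\hat\beta(n) = \max_{|P|=n}\beta(P) \ge (5n-6)/4$.

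The only real subtlety — and the step I'd expect to need the most care — is the geometric verification hidden inside the assignment lemmas: that a between-cluster empty triangle with a corner in $R_{i-1}$ (resp.\ $S_{i-1}$) really does intersect enough of the within-cluster triangles of $R_i$ (resp.\ $S_i$) to leave only $2$ (resp.\ $1$) usable, and that two such between-cluster triangles straddling $R_i$ from both sides kill all of its interior triangles. These are "easy to check from the figure" claims, but they rely on the precise coordinates of the clusters and on the cone-$C_1$ nesting; since these lemmas are already stated and proved above, for the final theorem I only need to invoke them. So the proof of the theorem itself is essentially bookkeeping: instantiate $n$ appropriately for each construction, cite the relevant lemma, and observe that infinitely many $n$ work for each (hence, by intersecting three arithmetic progressions, infinitely many $n$ make all three statements true with suitable witness sets).

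\begin{proof}
  Each of the three bounds is witnessed by one of the constructions given above in this section.

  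For $\alpha(n)\le 3n/4$, take $n=4t$ and let $P$ be the point set of Figure~\ref{disjoint-4cluster}, consisting of clusters $R_1,\dots,R_t$ with $R_{i+1}$ in cone $C_1$ of every point of $R_i$. Consider any family of interior-disjoint empty triangles of $P$. Every such triangle is either contained in a single cluster, or is a between-cluster triangle, which (as argued above) joins two consecutive clusters $R_{i-1},R_i$ with one corner in one of them; assign each between-cluster triangle to the cluster $R_i$ containing its non-corner endpoint. By the preceding lemma, at most $3$ triangles of the family are assigned to or contained in each $R_i$, so the family has size at most $3t=3n/4$. Hence $\alpha(P)\le 3n/4$, and therefore $\alpha(n)\le 3n/4$ for every $n$ divisible by $4$.

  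For $\mu^*(n)\le 2n/5$, the argument is identical with the $5$-point cluster $S$ of Figure~\ref{strong-matching-fig} in place of $R$: take $n=5t$, let $P$ consist of clusters $S_1,\dots,S_t$ with $S_{i+1}$ in cone $C_1$ of every point of $S_i$, and assign each between-cluster empty triangle of a strong matching to the cluster of its non-corner endpoint. A strong matching is in particular a family of pairwise disjoint empty triangles, so by the preceding lemma at most $2$ of its triangles are assigned to or contained in each $S_i$; hence $\mu^*(P)\le 2t=2n/5$, and $\mu^*(n)\le 2n/5$ for every $n$ divisible by $5$.

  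For $\hat\beta(n)\ge(5n-6)/4$, let $P$ be the point set of Figure~\ref{blocking-fig}, obtained from the base triangle together with $k$ stacked copies of the gadget $\gamma$; this set has $n=4k+2$ points and, by construction, at least $(5n-6)/4$ pairwise internally-disjoint empty triangles. Any blocking set of $G^\both(P)$ must contain a point in the interior of each of these triangles, and these interiors are pairwise disjoint, so $\beta(P)\ge (5n-6)/4$; in particular $\alpha(P)\ge(5n-6)/4$ as well. Consequently $\hat\beta(n)=\max_{|P'|=n}\beta(P')\ge (5n-6)/4$ for every $n\equiv 2\pmod 4$.

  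Finally, there are infinitely many $n$ (for example, all $n$ that are simultaneously $\equiv 0\pmod{4}$, $\equiv 0\pmod 5$, and $\equiv 2\pmod 4$ — which is impossible — so more carefully: all three bounds are monotone-robust statements about the minimum or maximum over $n$-point sets, and each of the three arithmetic progressions above is infinite) for which all three inequalities hold. Indeed, fix any $n\equiv 0\pmod{20}$; then the first two witness sets of sizes $n$ exist, and for $\hat\beta$ we may instead use $n'=n+?$; since the theorem asserts the three bounds hold for infinitely many $n$ and each holds for an infinite set of $n$, the claim follows by taking $n$ in the (infinite) intersection is not needed — it suffices that each bound holds for infinitely many $n$, and in fact the first two hold for all $n$ in an infinite progression and the last for all $n\equiv 2\pmod 4$. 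Thus for every $n$ divisible by $20$ the first two bounds hold, and for every $n\equiv 2\pmod 4$ the third holds, giving infinitely many $n$ in each case.
\end{proof}
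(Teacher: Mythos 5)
Your three core arguments are exactly the paper's: the theorem has no separate proof in the paper, and is simply the summary of the three constructions of that section, each instantiated on its natural arithmetic progression ($n=4t$ for the four-point clusters bounding $\alpha$, $n=5t$ for the five-point clusters bounding $\mu^*$, and $n=4k+2$ for the stacked gadget bounding $\hat\beta$ via $\hat\beta(n)\ge\beta(P)\ge\alpha(P)\ge 5k+1=(5n-6)/4$). Those three parts of your write-up are correct and correctly cite the assignment lemmas.

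The one genuine problem is your final paragraph, which as written is not a proof of anything: it starts a construction ("all $n$ that are simultaneously $\equiv 0\pmod 4$, $\equiv 0\pmod 5$, and $\equiv 2\pmod 4$"), immediately concedes it is impossible, and then trails off through several self-corrections without ever settling on a claim. You have correctly spotted the real issue --- the first and third constructions live in incompatible residue classes mod $4$, so no single $n$ is directly witnessed for all three bounds by the constructions as given --- but you need to resolve it in one clean sentence rather than leave the contradiction on the page. The honest resolution is to read the theorem distributively: each of the three inequalities holds for all $n$ in an explicit infinite arithmetic progression, which is all the constructions deliver and evidently all the paper intends (it gives no padding argument to align the moduli). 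If you instead insist on a single $n$ satisfying all three simultaneously, you would need an additional (routine but not free) argument that each construction tolerates a bounded number of extra far-away points without violating its bound, and neither you nor the paper supplies that. Replace the last paragraph with the distributive statement and the proof is complete.
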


\section{Additional Properties of \texorpdfstring{$\Theta_6$}{Theta6}-Graphs}
\label{sec:properties}

In this section, we prove some addition structural properties of $\Theta_6$-graphs.  
In particular, we prove bounds on the maximum number of edges, the minimum 
vertex-degree, and the maximum-size of an independent set.
Throughout this section, $P$ denotes a set of $n$ points in the plane.

\para{Edge Density.}

First, we are interested in the density, i.e., the number of edges.  
Clearly, the $\Theta_6$-graph is connected, hence has at least $n-1$ edges, 
and this is achieved for example by points on a vertical line. Morin and Verdonschot~\cite{Morin2014} 
studied the average number of edges of $\Theta_6$-graphs. Together with some other results, they 
showed that the expected number of edges (of the $\Theta_6$-graph of a set of $n$ points, 
chosen randomly, uniformly and independently in a unit square) is $4.186 n\pm O(\sqrt{n\log n})$.   
As for the maximum number, an easy argument shows  that there are at most $5n-11$ edges: 
For any set $P$ of $n\geq 3$ points, the graphs $G^\up(P)$ and $G^\down(P)$ are planar 
and contain at most $3n-6$ edges each. The $n-1$ edges of a minimum spanning tree belong 
to both graphs, so their union contains at most
Also recall that the intersection graph $G^\up(P)\cap G^\down(P)$ is connected, 
and thus has at least $n-1$ edges. Based on these facts, Babu \etal\cite{Babu2014} 
showed that $G^\both$ contains at most $2(3n-6)-(n-1)=5n-11$ edges.   
We can improve this slightly:

\begin{lemma}
Any $\Theta_6$-graph on $n\geq 3$ points has at most $5n-12$ edges.
\end{lemma}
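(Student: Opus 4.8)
The plan is to improve the bound $5n-11$ to $5n-12$ by finding one extra edge that is "double-counted" beyond the minimum spanning tree. Recall the existing argument: $G^\up(P)$ and $G^\down(P)$ are each planar with at most $3n-6$ edges, and they share at least the $n-1$ edges of the minimum-weight spanning tree $T$ (Lemma~\ref{intersection-lemma}), so $|E(G^\both)| \le 2(3n-6) - (n-1) = 5n-11$. To gain one more, I would argue that at least one of the two planar graphs $G^\up(P)$, $G^\down(P)$ cannot simultaneously be a full triangulation (i.e., have exactly $3n-6$ edges) \emph{and} have its shared-edge count with the other be only $n-1$; equivalently, $|E(G^\up)| + |E(G^\down)| - |E(G^\up\cap G^\down)| \le 5n-12$.

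The cleanest route is to exploit the "support hull" structure recalled in Section~\ref{subsec:background}: a TD-Delaunay graph $G^\up(P)$ is a plane graph consisting of a support hull (not necessarily convex) with a complete triangulation inside, so it has exactly $3n-6$ edges only when the support hull is a triangle. First I would handle the degenerate/easy case: if either $G^\up(P)$ or $G^\down(P)$ has at most $3n-7$ edges, we are immediately done, so assume both have exactly $3n-6$ edges, meaning each has a triangular support hull. Next I would examine the outer face: the support hull of $G^\up(P)$ is bounded by three "extreme" vertices, and similarly for $G^\down(P)$. The key geometric observation I would push on is that the convex hull of $P$ has at least three vertices, and each convex-hull vertex is extreme on the support hull of at least one of $G^\up$ or $G^\down$; moreover a convex hull edge $(u,v)$ lies on the outer face of at least one of the two graphs, and such an outer edge is one of only three outer edges there. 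I would then show that among all these outer edges, at least one edge $e$ belongs to \emph{both} $G^\up(P)$ and $G^\down(P)$ but is \emph{not} forced to be in the spanning tree $T$ — so the intersection graph $G^\up\cap G^\down$ actually has at least $n$ edges, not just $n-1$. Concretely: $G^\up\cap G^\down$ is connected (noted after Lemma~\ref{intersection-lemma}) and contains $T$; if it equals $T$ it is a tree, but a tree has no edge on the outer boundary of a $2$-connected-looking region — more carefully, I would show the intersection graph contains a cycle by producing an edge on the support hull of $G^\up$ that is also on the support hull of $G^\down$, using the fact that consecutive convex-hull edges meeting at a convex-hull vertex $w$ cannot both be "upward-extreme" unless the interior angle at $w$ is small, and a counting/pigeonhole over the (at least three) convex-hull vertices forces a shared hull edge.

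The main obstacle I anticipate is making the pigeonhole rigorous: I need to rule out the scenario where $G^\up\cap G^\down$ is exactly the minimum spanning tree $T$ while both $G^\up$ and $G^\down$ are full triangulations. The combinatorial heart is that a spanning tree has $n-1$ edges and \emph{no cycle}, so it suffices to exhibit one cycle in $G^\up(P)\cap G^\down(P)$; the natural candidate is a triangular face, or the outer boundary triangle, of one graph whose edges all survive in the other. I would try to show that the three support-hull edges of $G^\up(P)$ cannot all be "crossed" or "cut off" in $G^\down(P)$ — since those three edges connect the (at most six) extreme points and by Lemma~\ref{path-in-triangle-lemma} short paths stay inside the defining triangles, at least one such hull edge must reappear as an edge (not just a path) in $G^\down(P)$ when the two extreme points are mutually nearest in the relevant cones. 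If that still leaves a gap, the fallback is the weaker but sufficient claim that $G^\up$ and $G^\down$ cannot \emph{both} be triangulations whose union double-counts only the spanning tree, argued by an Euler-formula accounting on the union graph $G^\both$ directly: $G^\both$ drawn in the plane has crossings only between an up-edge and a down-edge, and a direct count of faces versus crossings in this specific overlay yields the $5n-12$ bound with the extra $-1$ coming from the outer face being shared.
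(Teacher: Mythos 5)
Your overall skeleton matches the paper's: split on whether one of $G^\up$, $G^\down$ fails to be a full triangulation (giving $3n-7$ edges immediately), and otherwise show that $|E(G^\up\cap G^\down)|\ge n$ rather than $n-1$. But the execution of the second case has a genuine gap. Your main plan is to produce \emph{one} edge that lies on the support hull of both $G^\up$ and $G^\down$ via a pigeonhole over convex-hull vertices. A single shared edge does not help: the intersection already contains the spanning tree $T$, and nothing prevents your shared hull edge from being an edge of $T$, in which case the intersection could still have only $n-1$ edges. What you actually need (and correctly identify in passing, but then drift away from) is a \emph{cycle} in $G^\up\cap G^\down$; your remark that ``a tree has no edge on the outer boundary'' is false --- minimum spanning trees of point sets routinely contain convex-hull edges --- so the cycle never gets exhibited. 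Your fallback (an Euler-formula count on the overlay with crossings) is not carried out and cannot be assessed.

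The missing idea is short: if both $G^\up$ and $G^\down$ have a triangular outer face, then, since each is a straight-line plane graph on $P$ whose outer face encloses all other points, each outer triangle must be the convex hull of $P$; hence the two outer triangles have the \emph{same} three vertices, and all three hull edges belong to both graphs. These three edges form a cycle, so the connected graph $G^\up\cap G^\down$ is not a tree and has at least $n$ edges, giving $|E(G^\both)|\le 2(3n-6)-n=5n-12$. This is exactly the paper's argument; your proposal circles it (``the outer boundary triangle of one graph whose edges all survive in the other'') but never establishes the coincidence of the two outer triangles, which is the one fact that makes the cycle appear.
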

\begin{proof}
  Consider the graphs $G^\up$ and $G^\down$.  If one of them has an outer face 
  that is not a triangle, then it has at most $3n-7$ edges, and re-doing the above 
  analysis gives the bound.  If both $G^\up$ and $G^\down$ have a triangle as 
  outer face, then the vertices on them are necessarily the same three vertices, 
  and the three edges between them belong to both $G^\up$ and $G^\down$ and form a 
  cycle.  Since the minimum spanning tree also belongs to both $G^\up$ and $G^\down$, there are  
  at least $n$ edges common to both graphs, and re-doing the analysis gives the bound.
\end{proof}

It is worth noting that $G^\up$ and $G^\down$ actually cannot both have a 
triangular outer face for $n\geq 4$ since this would contradict Lemma~\ref{lem:f3_4}: 
With $T$, the outer face we would have $f_3^\up=f_3^\down=1$, while $f_{4+}^\up=f_{4+}^\down=0$ 
since there is only one component of $G\setminus T$.

Note that the bound $5n-12$ is tight for $n=3$ if the three points form a triangle.
We do not know whether it is tight for larger $n$.
Babu \etal\cite{Babu2014} found a set of $n$ points whose $\Theta_6$-graph has $(4+1/3)n-13$ edges.  
We can improve on this and show that the factor `5' in the upper bound is tight.

\begin{lemma}
  For any $n\geq 7$, there exists a set of $n$ points whose $\Theta_6$-graph has $5n-17$ edges.
\end{lemma}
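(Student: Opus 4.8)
The goal is to construct, for every $n \geq 7$, a point set whose $\Theta_6$-graph has exactly $5n-17$ edges. Since $G^\up$ and $G^\down$ are planar triangulations with at most $3n-6$ edges, and every edge of the minimum spanning tree lies in both, to push $|E(G^\both)| = |E(G^\up)| + |E(G^\down)| - |E(G^\up \cap G^\down)|$ close to $5n$ I want: (i) both $G^\up$ and $G^\down$ to be as dense as possible, i.e. genuine triangulations with $3n-6$ edges each, or within a small additive constant, and (ii) the common subgraph $G^\up \cap G^\down$ to be as sparse as possible, ideally just a spanning tree on $n$ vertices with $n-1$ edges. Then $|E| = 2(3n-6) - (n-1) = 5n - 11$ would be the dream; the loss of $6$ in going to $5n-17$ presumably comes from the non-convexity of the support hulls plus boundary effects that one cannot entirely avoid, so I expect a small gadget whose analysis accounts for exactly that deficit.

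The plan is to build the example as a ``staircase'' or linear chain of small clusters, much like the constructions in Section~\ref{sec:disjoint-triangles}. First I would design a basic cluster of a fixed number of points (say $k$ points contributing $5k$ edges in the interior of a long chain), arranged so that within the cluster both the up-triangle graph and the down-triangle graph triangulate the cluster fully, while the only edges shared between $G^\up$ and $G^\down$ are the ``connector'' edges linking one cluster to the next (these shared edges form the spanning tree). Concretely, I would place clusters $R_1, \dots, R_t$ with $R_{i+1}$ sitting in cone $C_1$ of every point of $R_i$ (so the whole configuration is monotone and the support hulls degenerate in a controlled way), and within each cluster place points so that the local $\Theta_6$-graph is a maximal planar graph on those points together with a few edges to the neighbouring clusters. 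Then I would count: each internal cluster of $k$ points contributes $3k$ edges to $G^\up$ (its share of a triangulation), $3k$ to $G^\down$, minus the shared edges; picking the combinatorics so the per-cluster net is exactly $5k$, and then verifying that the two end clusters lose a total of $17 - 11 = 6$ more edges (or equivalently choosing the cluster size and boundary so $5n - 17$ comes out exactly) gives the bound. The hypothesis $n \geq 7$ suggests the smallest instance is a single cluster (or cluster-plus-tail) of $7$ points with $5\cdot 7 - 17 = 18$ edges, which I would exhibit and check directly as the base case, then argue the increment.

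The key steps in order: (1) specify the coordinates of the basic cluster $R$ and draw its $\Theta_6$-graph, identifying which edges come from up-triangles (odd cones), which from down-triangles (even cones), and which from both; (2) verify that within a cluster the up-edges triangulate the cluster's point set and likewise the down-edges, so that the cluster contributes the maximum possible to each half-graph; (3) specify how consecutive clusters are joined — the placement in cone $C_1$ guarantees that the only new edges between $R_i$ and $R_{i+1}$ are a constant number of ``bridge'' edges, and check these bridges behave as a tree-like connector and do not create or destroy edges elsewhere (using that empty triangles between non-consecutive clusters are impossible by the monotone stacking, analogous to the arguments in Section~\ref{sec:disjoint-triangles}); (4) count edges globally: sum the per-cluster contributions, subtract the shared/bridge edges counted once, and account for the two terminal clusters whose support hull is not triangulated on the outside; (5) solve for $t$ in terms of $n$ and confirm the total is $5n - 17$ for all admissible $n \geq 7$, padding with isolated collinear-style points if $n$ is not an exact multiple of the cluster size.

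The main obstacle I anticipate is step (2)–(3): ensuring simultaneously that $G^\up$ restricted to a cluster is a full triangulation, that $G^\down$ restricted to a cluster is a full triangulation, and that these two triangulations share essentially no edges — the ``closest point in each cone'' rule is rigid, so making all six cones of each point point at the ``right'' neighbour to realize both dense triangulations while keeping the overlap minimal requires a carefully chosen, probably somewhat asymmetric, cluster geometry. A secondary subtlety is controlling edges between clusters: I must check that no long empty up- or down-triangle spans two or more clusters in a way that adds unexpected edges (which would only help the lower bound but complicates the exact count) or, worse, that the stacking does not accidentally block a within-cluster triangle I was counting on (which would break the count). These are finite local checks — a picture plus a short case analysis of the cones of the $O(k)$ points near a cluster boundary — but getting the exact constant $17$ right, as opposed to $16$ or $18$, will hinge on the precise behaviour at the two ends of the chain, so I would treat the two terminal clusters as a separate explicit computation rather than trying to fold them into the generic cluster analysis.
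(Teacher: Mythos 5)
There is a genuine gap: you have written a strategy outline, not a proof. The statement requires exhibiting an explicit point set, and you never do — the cluster $R$ has no coordinates, the claim that both $G^\up$ and $G^\down$ restrict to full triangulations on each cluster while sharing essentially only a spanning tree is never verified, and you yourself flag steps (2)--(3) and the derivation of the exact constant $17$ as unresolved obstacles. It is far from clear that a cluster exists with the properties you need: the ``closest point per cone'' rule is, as you note, rigid, and you give no evidence that the two dense triangulations can coexist with an intersection as sparse as a tree. Since the whole content of the lemma is the construction, a proof that defers the construction proves nothing.

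It is also worth contrasting your intended mechanism with the one that actually works in the paper. You try to approach the $5n-12$ upper bound via $|E(G^\both)|=|E(G^\up)|+|E(G^\down)|-|E(G^\up\cap G^\down)|$ with both halves being near-triangulations. The paper's construction instead exploits the \emph{non-planarity} of $G^\both$ directly: take $n-6$ points on a vertical line (contributing $n-7$ edges) and surround them by six points $a_1,\dots,a_6$ as in Figure~\ref{augment-fig}. For each line point, the cones $C_1,C_3,C_4,C_6$ contain no other line point, so each of $a_1,a_3,a_4,a_6$ is adjacent to \emph{all} $n-6$ line points ($4n-24$ edges); the six outer points can be arranged to form an octahedron ($12$ edges), and $a_2,a_5$ each pick up one more edge, giving $n-7+4n-24+12+2=5n-17$. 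The density comes from four vertices of linear degree, not from two dense planar triangulations — a completely different and much simpler idea that avoids every difficulty you anticipate. If you want to salvage your approach, you would have to produce the cluster explicitly and carry out the finite checks; as written, the argument does not establish the lemma.
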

\begin{proof}
  See Figure~\ref{fig:many-edges-a}.
  Start with a set~$P$ of $n-6$ points on a vertical line; these have $n-7$ edges 
  between them (black bold).  Add 6 surrounding points $a_1,\dots,a_6$ as in Figure~\ref{augment-fig}. 
  Each of $a_1,a_3,a_4,a_6$ is adjacent to all points of $P$, adding $4n-24$ edges.  
  We are free to move $a_1,\dots,a_6$ (within their respective regions) and can 
  arrange them such that they form an octahedron, adding 12 edges among them 
  (blue dashed). Finally, we have one edge each from $a_2$ and $a_5$ to the 
  topmost/bottommost point of $P$.  
  Hence, in total we have $n-7+4n-24+12+2=5n-17$ edges.
\end{proof}

\begin{figure}[t]
  \centering
	\subcaptionbox{\label{fig:many-edges-a}}{\includegraphics[width=.35\columnwidth]{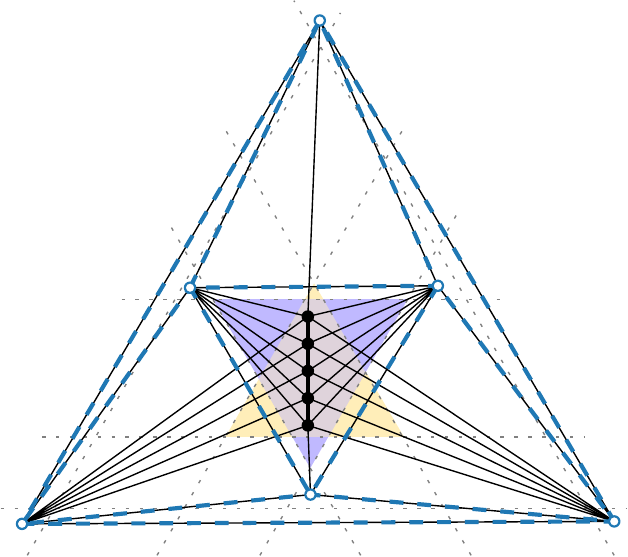}}
\hspace*{-10mm}
	\subcaptionbox{\label{fig:many-edges-b}}{\includegraphics[width=.63\columnwidth,page=2]{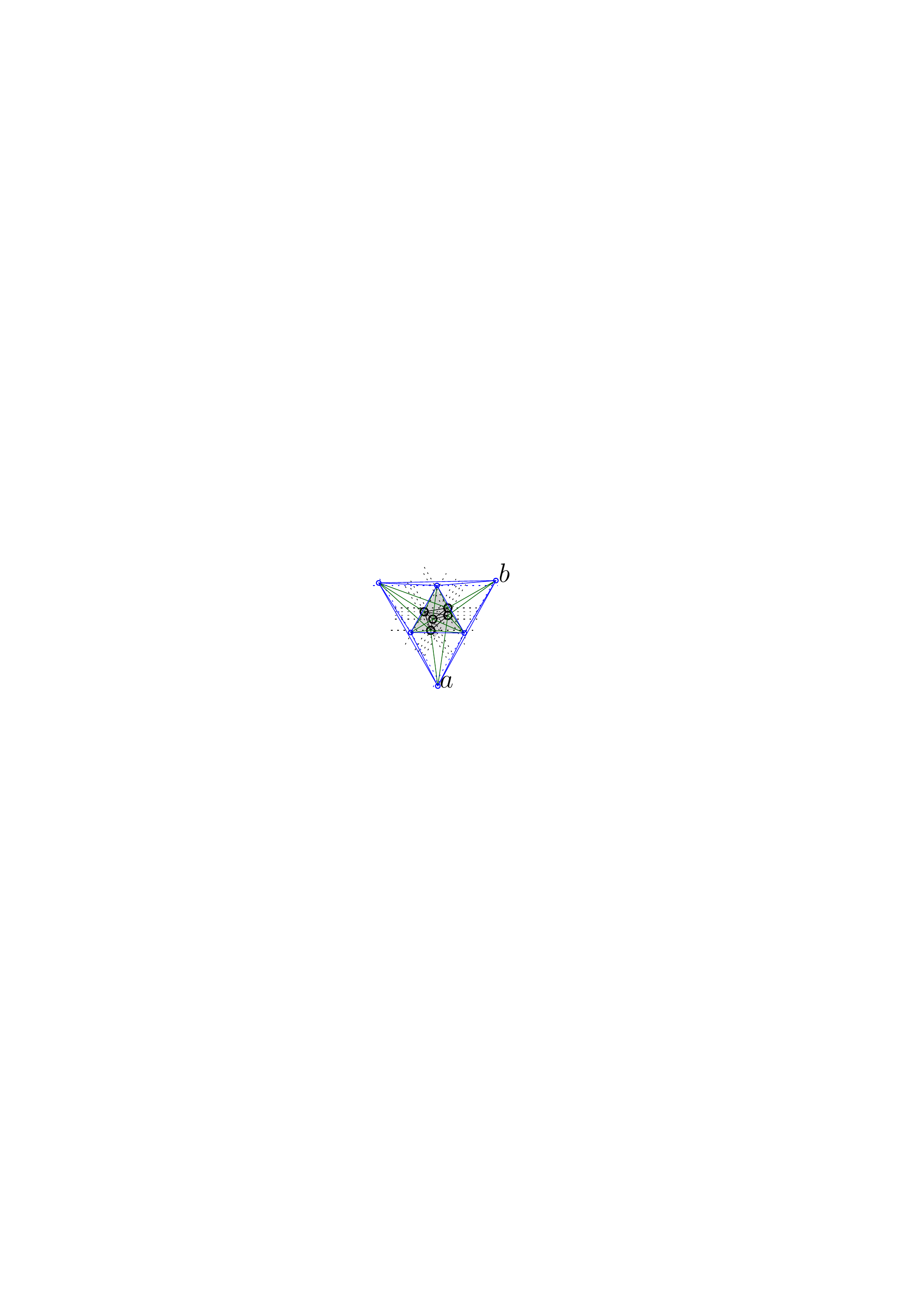}}
	\caption{(a) A set of $n=11$ points whose $\Theta_6$-graph has $5n-17=38$ edges.  (b) A set of $26$ points whose $\Theta_6$-graph has minimum degree 7.  Not all edges are shown.}
	\label{fig:many-edges}
	\label{fig:many-edges-new}
	\label{fig:high-degree}
\end{figure}

\para{Vertex-Degrees, Coloring and Independent Sets.}
Since the number of edges of every $\Theta_6$-graph with $n$ vertices is at 
most $5n-12$, its total vertex-degree is at most $10n-24$, so some vertex has degree at most 9.
In particular, therefore $\Theta_6$-graphs are 9-degenerate, which implies that 
they are 10-vertex-colorable (even 10-list-colorable) and have an independent set of size at least $n/10$.  

It remains open whether there are $\Theta_6$-graphs with minimum degree 9 or 
even minimum degree 8, but we can construct one with minimum degree 7;
see Figure~\ref{fig:many-edges-b}.  We construct our graph by starting 
with $K_5\setminus e$ (black bold edges), realized in such a way that each 
vertex $v$ has $7-\deg(v)$ cones that are empty (contain no other point).  
Then we add 6 surrounding points, arranged so that they form an octahedron 
(blue dashed edges). With this, each point of $K_5\setminus e$ obtains another
edge in each of its empty cones and hence has degree 7.  This gives a graph 
where all but two vertices $a,b$ have degree 7 or more.  Taking two copies of 
this graph and placing them such that the copies of $a$ and $b$ become adjacent 
then gives a $\Theta_6$-graph of minimum degree 7.  Note that more edges appear 
between the copies, but the minimum degree remains 7. We summarize in the following theorem.

\begin{theorem}
  Any $\Theta_6$-graph on $n$ points is 10-vertex-colorable and has an
  independent set of size at least~$n/10$. Furthermore, there are
  $\Theta_6$-graphs on $n\ge 11$ points with minimum degree~7.
\end{theorem}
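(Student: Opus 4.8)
The plan is to dispatch the two coloring/independent-set assertions together through a single degeneracy argument, and to handle the minimum-degree assertion by the explicit geometric construction sketched after the edge-density lemmas. I would assume throughout the preceding bound that any $\Theta_6$-graph on $m\ge 3$ points has at most $5m-12$ edges.

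First I would show that every $\Theta_6$-graph $G^\both(P)$ is \emph{$9$-degenerate}, i.e.\ every subgraph has a vertex of degree at most~$9$. It suffices to treat induced subgraphs, and the key observation is that for any $P'\subseteq P$ the induced subgraph $G^\both(P)[P']$ is a \emph{subgraph} of $G^\both(P')$: if an edge $(p,q)$ with $p,q\in P'$ is introduced by a triangle empty with respect to $P$, then that same triangle is empty with respect to the smaller set $P'$, so $(p,q)$ is an edge of $G^\both(P')$ as well. Hence $G^\both(P)[P']$ has at most as many edges as $G^\both(P')$, namely at most $5|P'|-12$ when $|P'|\ge 3$ (and trivially at most $1\le 9$ when $|P'|\le 2$). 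By the handshake lemma the degree sum in $G^\both(P)[P']$ is at most $10|P'|-24$, so some vertex has degree at most~$9$. Since this holds for every induced subgraph, $9$-degeneracy follows.

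From $9$-degeneracy both coloring claims are standard. Ordering the vertices in a degeneracy ordering (repeatedly delete a vertex of current minimum degree and reverse) and coloring greedily, each vertex has at most $9$ already-colored neighbours when it is processed, so a colour from a palette of size~$10$ is always free; the same works if each vertex carries its own list of $10$ admissible colours, giving $10$-list-colourability. Any proper $10$-colouring partitions the $n$ vertices into $10$ independent classes, so the largest class is an independent set of size at least $n/10$.

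For the minimum-degree-$7$ example I would carry out the construction indicated in the excerpt. Start from a plane realization of $K_5\setminus e$ as a $\Theta_6$-graph in which each vertex $v$ is placed so that exactly $7-\deg_{K_5\setminus e}(v)$ of its six cones are empty of the other four points; the three degree-$4$ vertices then have $3$ empty cones each and the two endpoints of the removed edge have $4$ each, for $\sum_v\bigl(7-\deg(v)\bigr)=17$ empty cones in total. Then add six surrounding points placed as in Figure~\ref{augment-fig} and perturbed to form an octahedron (contributing $12$ edges among themselves). Each empty cone $C_i^v$ now receives its closest point at the corresponding surrounding point, so every vertex of $K_5\setminus e$ gains one edge per empty cone and reaches degree~$7$. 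Distributing the $17$ new edges to the six surrounding points so that four of them receive at least three edges (degree $\ge 4+3=7$) leaves two surrounding points $a,b$ of degree $6$. Finally take two copies of this configuration and position them so that $a$ of one copy becomes adjacent to $a$ of the other, and likewise for $b$: each of the four deficient vertices gains at least one edge, reaching degree~$7$, while every other vertex already had degree at least~$7$ and can only gain. This yields a $\Theta_6$-graph of minimum degree~$7$ on $22$ points, and larger values of $n$ follow by attaching further copies or padding points.

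The routine part is the degeneracy/colouring argument; the main obstacle is entirely geometric \emph{realizability}. Concretely, I must exhibit explicit coordinates (as in Figure~\ref{fig:high-degree}) verifying that $K_5\setminus e$ is realized as a $\Theta_6$-graph with precisely the prescribed empty cones, that adding the octahedron of surrounding points creates exactly the intended new closest-point relations (so the empty-cone directions hit the surrounding points with the right multiplicities and no stray edges spoil the degree bookkeeping), and that the two-copy placement produces the desired adjacencies between the copies of $a$ and $b$ without any vertex dropping below degree~$7$. These checks are most cleanly discharged by reading the cones off of concrete coordinates rather than arguing combinatorially.
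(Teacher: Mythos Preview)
Your approach is essentially the paper's: the $9$-degeneracy argument from the $5m-12$ edge bound yields the colouring and independent-set claims (you even make explicit, via $G^\both(P)[P']\subseteq G^\both(P')$, the step the paper glosses over), and the doubled $K_5\setminus e$-plus-octahedron construction is exactly the paper's. Your edge-distribution bookkeeping for the surrounding points is a bit loose (the $17$ empty-cone edges do not determine the surrounding-point degrees by themselves, since those vertices also have inward-pointing cones, and the distribution is fixed by geometry rather than chosen), but since you correctly defer the final realizability check to explicit coordinates as in Figure~\ref{fig:high-degree}, this does not affect the argument.
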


\section{Conclusions and Open Problems}

We have improved the lower bound on the size of a matching in 
any $\Theta_6$-graph on $n$ points to $(3n-8)/7$.  A main open problem is to 
prove the conjecture that any $\Theta_6$-graph has a (near-)perfect matching.

We have shown that this conjecture is equivalent to proving that every 
$\Theta_6$-graph on $n$ points requires at least $n-1$ points to block all its edges.  
More generally, we proved a relationship between the minimum size of maximum 
matchings and the minimum size of blocking sets so that any improvement in the 
lower bound for one of these parameters will also improve the other. 

We have shown that this conjecture is equivalent to proving that every 
$\Theta_6$-graph on $n$ points requires at least $n-1$ points to block all its edges.  
More generally, we proved a relationship between the minimum size of maximum 
matchings and the minimum size of blocking sets so that any improvement in the
lower bound for one of these parameters will also improve the other.

\para{Acknowledgements.}
This work was done by a University of Waterloo problem solving group.  We thank 
the other participants, Alexi Turcotte and Anurag Murty Naredla, for helpful discussions. 

\bibliographystyle{abbrvurl}
\bibliography{Theta6-Matching}

\end{document}